\documentclass[11pt]{article}
\usepackage[letterpaper]{geometry}
\usepackage{amsmath,amsthm,amsfonts,amssymb}
\usepackage{enumerate,color,xcolor}
\usepackage{graphicx}
\usepackage{subfigure}
\usepackage{url}

\numberwithin{equation}{section}
\theoremstyle{plain}

\newtheorem{theorem}{Theorem}

\newtheorem{definition}[theorem]{Definition}

\newtheorem{proposition}[theorem]{Proposition}

\newtheorem{remark}[theorem]{Remark}
\newtheorem{assumption}[theorem]{Assumption}
\usepackage{comment}

% shortcuts for edits
\usepackage{ulem}

%\titlespacing{\section}{0pt}{0.2ex}{0ex}
%\titlespacing{\subsection}{0pt}{0.3ex}{0.3ex}
%\linespread{0.96}

\usepackage{amsmath,amssymb,amsthm,dsfont}
\numberwithin{equation}{section}
\usepackage[colorlinks,linkcolor=orange,citecolor=blue]{hyperref}

\begin{document}

\begin{center}
  \Large \bf Asian options for local-stochastic volatility models in the short-maturity regime 
\end{center}

\author{}
\begin{center}
{Dan Pirjol}\,\footnote{School of Business, Stevens Institute of Technology, United States of America;
  dpirjol@gmail.com},
  Lingjiong Zhu\,\footnote{Department of Mathematics, Florida State University, United States of America; zhu@math.fsu.edu
 }
\end{center}

\begin{center}
 \today
\end{center}

\begin{abstract}
We derive the short-maturity asymptotics for Asian option prices in local-stochastic volatility (LSV) models. 
Both out-of-the-money (OTM) and at-the-money (ATM) asymptotics are considered. 
Using large deviations theory methods, the asymptotics for the OTM options are expressed as 
a rate function which is represented as a two-dimensional variational problem. 
We develop a novel expansion method for the variational problem by expanding the rate function 
around the ATM point.
In particular, we derive series expansions in log-moneyness for the solution of this variational problem around the ATM point,  
and obtain explicit results for the first three terms. 
We give the ATM volatility level, skew and convexity of the implied volatility of 
an Asian option in a general local-stochastic volatility model, which can be used as an approximation for pricing Asian options with strikes sufficiently close to the ATM point.
Using numerical simulations in the SABR, Heston and an LSV model with bounded
local volatility, we show good performance 
of the asymptotic result for Asian options with sufficiently small maturity. 
\end{abstract}

%%%%%%%%%%%%%%%%%%%%%%%%%%
\section{Introduction}

Asian options are popular instruments traded in many financial markets,
on underlyings such as commodity futures, equities, indices and currency
exchange rates. Compared to vanilla European options, they have the
advantage that they are less sensitive to short term price fluctuations
of the underlying asset, due to their averaging property. Typically, such
options have payoffs of the form:
\begin{equation}\label{payoff}
\mathrm{Call}\,\,\,\mathrm{Payoff} = \left(\frac{1}{T}\int_{0}^{T}S_{t}dt - K\right)^{+}\,,
\qquad
\mathrm{Put}\,\,\,\mathrm{Payoff} = \left(K-\frac{1}{T}\int_{0}^{T}S_{t}dt\right)^{+}\,,
\end{equation}
for the call and put options where $S_{t}$ is the asset price at time $t$, $K>0$ is the strike
price, $T>0$ is the maturity, and $x^{+}$ denotes $\max(x,0)$ for any $x\in\mathbb{R}$. 
Although in practice the averaging is in discrete time (daily averaging), 
it is convenient to use the continuous-time average as in \eqref{payoff}.

There is a vast literature on pricing Asian options under a wide variety of models.
The most studied case is that of the Black-Scholes model, where we mention 
several analytical approaches: the Geman and Yor method \cite{GY,DufresneReview}, 
the spectral expansion method \cite{Linetsky2004} and the Laguerre expansion method \cite{Dufresne2000}.
PDE methods can be used for pricing Asian options in a wide
variety of models \cite{RogersShi, Vecer}. The PDE
approach has been used to obtain analytical expansions in the short-maturity limit for Asian option prices under the local volatility model \cite{FPP2013}.
A diffusion operator integral expansion has been proposed in \cite{Ding2023}. 
Asian option pricing in jump-diffusion models has also been studied by Bayraktar and Xing \cite{BayraktarXing}.

Cai and Kou (2012) \cite{CaiKou} generalized the approach of \cite{GY} to a general class of jump diffusion models. This approach requires inverting a double Laplace transform.
Cai, Song and Kou (2015) \cite{CSK} showed that this approach can be simplified by discretizing the state space and approximating the asset price process with a continuous-time Markov chain (CTMC). The CTMC approach was applied also in \cite{Chatterjee2018}  and was simplified further in Cui et al. (2018) \cite{Cui2018}. 

A precise approach for pricing Asian options under Monte Carlo (MC) simulations
uses an optimal importance sampling method which applies a change of drift such that the variance of the MC estimators is minimized. An asymptotically optimal change of drift can be determined by large deviations theory and was found for the Black-Scholes model by Glasserman et al. (1999) \cite{Glasserman1999} in discrete time and by Guasoni and Robertson (2008) \cite{Guasoni2008} in continuous time.

When the log asset price has independent increments, a backward recursion combined with 
Fourier inversion methods can be used in pricing Asian options. This was applied in the context of the Black-Scholes (BS) model by 
Carverhill and Clewlow (1990) \cite{CarvClewlow}, and was improved by Benhamou (2002) \cite{Benhamou}. 
The method was applied also to 
exponential L\'{e}vy models by Fusai (2004) \cite{Fusai2004} and Fusai and Meucci (2008) \cite{Fusai2008}.
An alternative recursion method has been presented in \cite{PZIME} for pricing discretely sampled Asian options in the Black-Scholes model.

On the other hand, Asian option pricing under stochastic volatility models has been much less studied.
Fouque and Han (2003) \cite{Fouque2003} studied the pricing of Asian options in a class of models where the volatility is driven by several mean-reverting processes. 
Robertson (2010) \cite{Robertson2010} extended the optimal sampling results of Guasoni and Robertson (2008) \cite{Guasoni2008} to stochastic volatility models, with explicit results for the Heston model. 
In recent work Al\`os et al. (2024) \cite{Alos2022} obtained an expansion for the implied volatility of Asian options around the ATM point for stochastic volatility models with Gaussian and fractional volatility, using Malliavin calculus methods. Lin and Chang (2020) \cite{Lin2020} obtained an expansion in volatility of volatility for Asian options in stochastic volatility models. 
Fusai and Kyriakou (2016) \cite{Fusai2016} derived a lower bound approximation for Asian option prices that works
for a wide class of models including stochastic volatility models.

We study in this paper the short-maturity asymptotics of Asian option prices in local-stochastic volatility models. The asymptotics of Asian options in the short-maturity limit has been studied in the literature for the local volatility model \cite{PZAsian, PZAsianCEV, Arguin2018,Park2019}, which includes as a limiting case the Black-Scholes model when the local volatility is a constant, and for jump-diffusion models \cite{PZJD}. In the Black-Scholes model the subleading $O(T)$ correction has been computed in \cite{subleading}.
A double expansion in maturity and log-strike was obtained by PDE methods for the prices of Asian options in the local volatility model in \cite{FPP2013}. The numerical performance of the short-maturity expansion for pricing Asian options in the Black-Scholes model with realistic maturities is comparable or better to that of alternative methods which are more time-intensive; see \cite{PZRisk} for a detailed comparison.

The asymptotics for the local volatility model is different in the two regimes of out-of-the-money (OTM) and at-the-money (ATM) Asian options.
For OTM Asian options, the leading asymptotics is obtained using large deviations theory methods 
\cite{Dembo1998,VaradhanLD}, while the ATM asymptotics is dominated by Gaussian fluctuations of the asset price around the spot price. 
We mention also the results of 
Gobet and Miri (2014) \cite{GobetMiri} who studied the expansion of the time-average of a diffusion in a small parameter such as time or volatility using Malliavin calculus, with applications to pricing Asian options. 

In this paper we consider both OTM and ATM short-maturity asymptotics for Asian options with continuous time averaging in local-stochastic volatility models. This includes as a particular case the stochastic volatility models. The asymptotics for ATM case has a simple leading-order term 
and the asymptotics for the OTM case is expressed in terms of a rate function which is given by the solution of a two-dimensional variational problem, 
which does not seem to yield simple explicit expressions. 
Instead, we develop a novel expansion method for the variational problem by expanding the rate function 
around the ATM point. In particular, we solve the variational problem in an expansion in log-moneyness $x:=\log(K/S_{0})$ and obtain explicit results for the first three terms in the expansion of the rate function in $x$. 
To the best of our knowledge, our expansion method is the first in the literature to approximate
the variational problem that arises from a rate function associated with the OTM short-maturity asymptotics
for option pricing by expanding around the ATM point without solving the variational problem first. 
We believe that our expansion method can be applicable in many other problems
in OTM option pricing where the rate function derived from large deviations theory cannot be explicitly solved.
For the limiting case of the log-normal SABR model we reproduce the first two terms which were obtained recently by Al\`os et al. (2024) \cite{Alos2022}.
We present also comparisons with numerical simulations for SABR, Heston and 
a local-stochastic volatility model, which show good agreement with the asymptotic expansion for Asian options with sufficiently small maturities and strikes sufficiently close to the ATM point. 

The paper is organized as follows. We introduce the local-stochastic volatility model 
and the technical assumptions on the model coefficients in Section~\ref{sec:model}.  
Section~\ref{sec:main} contains the main results of the paper:  Theorem \ref{thm:OTM} gives
the short-maturity asymptotics of OTM Asian options, and Theorem \ref{thm:ATM} gives the corresponding asymptotics for the ATM case.
For the OTM case, we develop a novel expansion around
the ATM point which is presented in Proposition \ref{prop:first:order}.
The limits of perfectly correlated and anticorrelated asset price and volatility ($\rho = \pm 1$) are more tractable and an explicit solution is obtained for these cases in Proposition \ref{prop:pm:1}. 
Applications to SABR, Heston models and numerical experiments 
are presented in Section~\ref{sec:numerics}.
Finally, we provide some background of large deviations theory in Appendix~\ref{sec:LDP},
and present the results for floating-strike Asian options in Appendix~\ref{sec:floating}.
The technical proofs of the results are given in Appendix~\ref{sec:proofs},
and Appendix~\ref{sec:appC} gives certain special functions used in the proof of Proposition  \ref{prop:first:order}.

%%%%%%%%%%%%%%%%%%%%%%%%%%
\section{Local-Stochastic Volatility Model}\label{sec:model}

Suppose that the underlying asset $S_{t}$ under the risk-neutral probability measure $\mathbb{Q}$ 
follows a local-stochastic volatility model of the form:
\begin{align}
&\frac{dS_{t}}{S_{t}}=(r-q)dt+\eta(S_{t})\sqrt{V_{t}}dB_{t},\label{eqn:S}
\\
&\frac{dV_{t}}{V_{t}}=\mu(V_{t})dt+\sigma(V_{t})dZ_{t},\label{eqn:V}
\end{align}
with $S_{0},V_{0}>0$, where $B_{t},Z_{t}$ are two standard Brownian motions that are correlated with correlation $\rho$, 
and $r\geq 0$ is the risk-free rate and $q\geq 0$ is the dividend yield. 
When $\eta(\cdot)$ is constant, \eqref{eqn:S}-\eqref{eqn:V} reduces
to the stochastic volatility model;
when $\mu(\cdot)$ and $\sigma(\cdot)$ are constantly zero so that $V_{t}\equiv V_{0}$, 
\eqref{eqn:S}-\eqref{eqn:V} reduces
to the local volatility model.
We assume that $\eta(\cdot),\sigma(\cdot):\mathbb{R}_{+}\rightarrow\mathbb{R}_{+}$ and $\mu(\cdot):\mathbb{R}_{+}\rightarrow\mathbb{R}$ are
uniformly bounded for simplicity, that is, we impose the following assumption.

\begin{assumption}\label{assump:bounded}
We assume that $\eta(\cdot),\mu(\cdot)$ and $\sigma(\cdot)$ are
uniformly bounded:
\begin{equation}
\sup_{x\in\mathbb{R}_{+}}\eta(x)\leq M_{\eta},
\qquad
\sup_{x\in\mathbb{R}_{+}}|\mu(x)|\leq M_{\mu},
\qquad
\sup_{x\in\mathbb{R}_{+}}\sigma(x)\leq M_{\sigma}.
\end{equation}
\end{assumption}

In addition, we impose the following
assumption on $\eta(\cdot)$ and $\sigma(\cdot)$
that is needed for the small-time large deviations estimates for \eqref{eqn:S}-\eqref{eqn:V}.

\begin{assumption}\label{assump:LDP}
We assume that $\inf_{x\in\mathbb{R}_{+}}\sigma(x)>0$
and $\inf_{x\in\mathbb{R}_{+}}\eta(x)>0$. Moreover, 
there exist some constants $M,\alpha>0$ such that
for any $x,y\in\mathbb{R}$,
$|\sigma(e^{x})-\sigma(e^{y})|\leq M|x-y|^{\alpha}$
and $|\eta(e^{x})-\eta(e^{y})|\leq M|x-y|^{\alpha}$.
\end{assumption}

We also assume that $p$-th moment for the asset price is finite for some $p>1$.

\begin{assumption}\label{assump:S:T:p}
There exists some $p>1$, such that there exists some $C'_{p}\in(0,\infty)$, such that $\mathbb{E}[S_{t}^p]\leq C'_{p}$ 
for any sufficiently small $t>0$.
\end{assumption}

Note that Assumption~\ref{assump:S:T:p} is satisfied
under mild conditions. For example, it is satisfied
when $\rho < - \sqrt{(p-1)/p}$ for some $p>1$ and Assumption~\ref{assump:bounded} holds, where $\rho$ is the correlation between the Brownian motions $B_{t},Z_{t}$ in \eqref{eqn:S}-\eqref{eqn:V}; see
Remark 2.1. in \cite{PWZ2024} for the details, where short-maturity VIX and European options under the local-stochastic volatility model \eqref{eqn:S}-\eqref{eqn:V} are studied \cite{PWZ2024}.

For Asian options, the prices of the call
and put options are given 
by\footnote{These are the so-called fixed-strike Asian options. 
Alternative instruments called floating-strike Asian options are discussed in Appendix \ref{sec:floating}.}
\begin{equation}\label{asian:price:defn}
C(T)=e^{-rT}\mathbb{E}\left[\left(\frac{1}{T}\int_{0}^{T}S_{s}ds-K\right)^{+}\right],
\qquad
P(T)=e^{-rT}\mathbb{E}\left[\left(K-\frac{1}{T}\int_{0}^{T}S_{s}ds\right)^{+}\right],
\end{equation}
where $T>0$ is the maturity, $K>0$ is the strike price and $(S_{t})_{t\geq 0}$ is the asset price process
that satisfies \eqref{eqn:S}.

The forward price of an Asian option with maturity $T$ is given by
\begin{equation}
F(T) := \mathbb{E}\left[\frac{1}{T} \int_0^T S_t dt \right] = S_0 \frac{e^{(r-q) T}-1}{r-q} \,.
\end{equation}

We distinguish between the three cases: the out-of-the-money (OTM) case 
when $K > F(T)$ for call options and $K< F(T)$ for put options, and
the in-the-money (ITM) case when $K < F(T)$ for call options
and $K > F(T)$ for put options, 
and the at-the-money (ATM) case when $K = F(T) $ for both call and put options.

In the short-maturity limit we have $\lim_{T\to 0} F(T) = S_0$ such that in this limit the three cases are distinguished by the relation of the strike $K$ to the spot price $S_0$.
We will study the OTM case and the ATM case, whereas
the ITM case can be studied by the put-call parity via the OTM case
and is hence omitted here.

%%%%%%%%%%%%%%%%%%%%%%%%%%%%%%%%%%%

\section{Main Results}\label{sec:main}

\subsection{Asymptotics for OTM Asian options}

In this section we consider the case of OTM Asian options. Recall that in the short-maturity
limit this corresponds to $S_{0}<K$ for Asian call options and $S_{0}>K$ for Asian put options.

\begin{theorem}\label{thm:OTM}
Suppose Assumptions~\ref{assump:bounded}, \ref{assump:LDP} and \ref{assump:S:T:p} hold.
When $S_{0}<K$, the Asian call options are OTM and we have
\begin{equation}
\lim_{T\rightarrow 0}T\log C(T)=-\mathcal{I}_{\rho}(S_{0},V_{0},K),
\end{equation}
and when $S_{0}>K$, 
the Asian put options are OTM and we have
\begin{equation}
\lim_{T\rightarrow 0}T\log P(T)=-\mathcal{I}_{\rho}(S_{0},V_{0},K),
\end{equation}
where
\begin{align}
\mathcal{I}_{\rho}(S_{0},V_{0},K)
:=\inf_{\substack{g(0)=\log S_{0},h(0)=\log V_{0}\\
\int_{0}^{1}e^{g(t)}dt=K, g,h\in\mathcal{AC}[0,1]}}\Lambda_{\rho}[g,h],\label{I:rho:1}
\end{align}
where $\mathcal{AC}[0,1]$ is the space of absolutely continuous functions on $[0,1]$ and
\begin{align}
\Lambda_{\rho}[g,h]:=\frac{1}{2(1-\rho^{2})}\int_{0}^{1}\left(\frac{g'(t)}{\eta(e^{g(t)})\sqrt{e^{h(t)}}}-\frac{\rho h'(t)}{\sigma(e^{h(t)})}\right)^{2}dt
+\frac{1}{2}\int_{0}^{1}\left(\frac{h'(t)}{\sigma(e^{h(t)})}\right)^{2}dt.\label{I:rho:2}
\end{align}
\end{theorem}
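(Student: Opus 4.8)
The plan is to derive the short-maturity logarithmic asymptotics via a large deviations principle (LDP) for the pair process $(\log S_{t}, \log V_{t})$ over a rescaled time interval, combined with the contraction principle applied to the functional $(g,h)\mapsto \int_0^1 e^{g(t)}dt$. First I would perform the standard diffusive rescaling: set $t = Ts$ for $s\in[0,1]$ and write $X^T_s := \log S_{Ts}$, $Y^T_s := \log V_{Ts}$. Applying It\^o's formula to \eqref{eqn:S}--\eqref{eqn:V}, the rescaled processes satisfy SDEs whose diffusion coefficients carry a factor $\sqrt{T}$, so $(X^T, Y^T)$ converges to the deterministic constant path $(\log S_0, \log V_0)$ and the fluctuations are of order $\sqrt{T}$; equivalently, $(X^T, Y^T)$ satisfies an LDP on $C([0,1];\mathbb{R}^2)$ with speed $T$ and a good rate function $\Lambda_\rho[g,h]$ obtained from the quadratic action associated with the (inverse) diffusion matrix of the system. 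Under Assumptions~\ref{assump:bounded} and \ref{assump:LDP} (uniform ellipticity and H\"older continuity of $\eta,\sigma$, plus boundedness of the drift), this small-time sample-path LDP is standard — it follows from Freidlin--Wentzell theory adapted to the short-time regime, or from the results collected in Appendix~\ref{sec:LDP}. A short computation identifying the Legendre transform of the Hamiltonian with the Lagrangian gives precisely the expression \eqref{I:rho:2}: the $\frac{1}{2(1-\rho^2)}$ prefactor and the cross term $-\rho h'/\sigma$ are exactly the entries of the inverse of the correlation-weighted diffusion matrix $\begin{pmatrix}1 & \rho\\ \rho & 1\end{pmatrix}$ conjugated by $\mathrm{diag}(\eta\sqrt{e^{h}}, \sigma)$, and the drift terms $\mu$ contribute nothing at the leading exponential order because they are bounded and the rate function scales quadratically in the velocities.

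Next I would push the sample-path LDP forward through the averaging map. The map $\Phi: C([0,1];\mathbb{R}^2)\to\mathbb{R}$, $\Phi(g,h) = \int_0^1 e^{g(t)}dt$, is continuous in the sup-norm, so the contraction principle yields an LDP for the averaged price $A_T := \frac{1}{T}\int_0^T S_t\,dt = \int_0^1 e^{X^T_s}\,ds$ with speed $T$ and rate function
\begin{equation*}
J(a) = \inf\left\{\Lambda_\rho[g,h] : g(0)=\log S_0,\ h(0)=\log V_0,\ \int_0^1 e^{g(t)}dt = a\right\}.
\end{equation*}
For the OTM call with $S_0 < K$, the event $\{A_T \ge K\}$ is a deviation away from the typical value $S_0$, and $\mathbb{E}[(A_T - K)^+] = \mathbb{E}[(A_T - K)^+ \mathbf{1}_{A_T \ge K}]$. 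The upper bound on $\limsup_{T\to0} T\log C(T)$ follows from the LDP upper bound together with a uniform integrability / moment control argument: Assumption~\ref{assump:S:T:p} gives $\mathbb{E}[S_t^p]\le C_p'$ for some $p>1$, hence $\mathbb{E}[A_T^p]$ is bounded, and H\"older's inequality lets one absorb the linear factor $(A_T - K)$ without changing the exponential rate (the truncated first moment is controlled by $\mathbb{P}(A_T\ge K)^{1/p'}$ times a bounded $L^p$ norm). The lower bound follows by restricting to the event $\{A_T \in (K+\delta, K+2\delta)\}$ on which $(A_T - K)^+ \ge \delta$, applying the LDP lower bound, and letting $\delta\downarrow 0$ using lower semicontinuity and the fact that $J$ is continuous and strictly decreasing in $a$ to the right of the infimum location — so $\inf_{a\ge K} J(a) = J(K) = \mathcal{I}_\rho(S_0,V_0,K)$. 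The put case $S_0 > K$ is symmetric, using $\{A_T \le K\}$ and $\inf_{a\le K} J(a) = J(K)$, which holds because $J$ is increasing to the left of the typical value; discounting by $e^{-rT}\to 1$ is irrelevant at this scale.

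The main obstacle I anticipate is twofold. First, establishing the small-time sample-path LDP for the degenerate system \eqref{eqn:S}--\eqref{eqn:V} with only H\"older-continuous (not Lipschitz) diffusion coefficients requires care: one typically freezes the coefficients, proves the LDP for the frozen system with explicit Gaussian-type rate function, and then transfers via exponential equivalence, where the H\"older modulus in Assumption~\ref{assump:LDP} is exactly what is needed to show the original and frozen processes are exponentially close. Handling the boundary behavior of $V_t$ near $0$ (the domain is $\mathbb{R}_+$) also needs the uniform lower bound $\inf \sigma > 0$ and $\inf \eta > 0$ from Assumption~\ref{assump:LDP}. Second, converting the probability-level LDP into a statement about the option price requires the moment bound of Assumption~\ref{assump:S:T:p} to rule out a contribution from large values of $A_T$ on the exponential scale; without it, the linear payoff could in principle dominate. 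I would handle this with a standard truncation argument splitting $\mathbb{E}[(A_T-K)^+]$ at a large level $L$, bounding the tail $\mathbb{E}[(A_T - K)^+\mathbf{1}_{A_T > L}] \le \mathbb{E}[A_T^p]/L^{p-1}$ and showing that for $L$ fixed large the truncated expectation has the right rate, then letting $L\to\infty$. These technical steps are deferred to Appendix~\ref{sec:proofs}; the computation of the explicit rate function \eqref{I:rho:2} from the diffusion matrix is the one genuinely model-specific piece and is otherwise routine.
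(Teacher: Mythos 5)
Your overall strategy mirrors the paper's: a sample-path LDP for $(\log S_{tT},\log V_{tT})$ on $[0,1]$ with rate functional $\Lambda_\rho$, push-forward through the continuous averaging map $\Phi(g,h)=\int_0^1 e^{g(t)}dt$ via the contraction principle, a standard lower bound by restricting to a small window around $K$, and an upper bound that converts the probability asymptotics into an expectation asymptotics using the $p$-th moment control of Assumption~\ref{assump:S:T:p}. The rate functional you derive and your discussion of why the bounded drift $\mu$ is subexponential both match the paper's route, which in turn relies on the small-time LDP from Varadhan/Robertson and the contraction principle (Theorem~\ref{Contraction:Thm}).

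There is, however, a concrete gap in the final step of your upper bound. You propose to split at a level $L$ and bound the tail by a Markov-type estimate,
\begin{equation*}
\mathbb{E}\left[(A_T-K)^+\mathbf{1}_{A_T>L}\right]\ \le\ \frac{\mathbb{E}[A_T^p]}{L^{p-1}}\,,
\end{equation*}
and then "let $L\to\infty$." This bound is \emph{uniform in} $T$ (it is $O(1)$ as $T\to 0$ for each fixed $L$), so $T\log$ of the right-hand side tends to $0$, not to a negative number. Since the desired limit is $-\mathcal{I}_\rho<0$, the tail term appears to dominate and the max gives $0$; sending $L\to\infty$ afterwards does not repair this because the limit in $T$ is taken first. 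The earlier sentence where you apply H\"older directly to the whole payoff, controlling $\mathbb{E}[(A_T-K)^+]$ by $\mathbb{P}(A_T\ge K)^{1/p'}$ times a bounded $L^p$ norm, has the opposite problem: it gives a rate $\tfrac{1}{p'}\mathcal{I}_\rho(S_0,V_0,K)$ with $p'>1$, which is strictly too small. The paper avoids both issues by splitting at a finite $U>K$, bounding the near part by $(U-K)\,\mathbb{Q}(A_T\ge K)$ (which has the exact rate $\mathcal{I}_\rho(S_0,V_0,K)$), and applying H\"older only to the far part, which then carries $\tfrac{1}{q}\mathcal{I}_\rho(S_0,V_0,U)$; one then chooses $U$ large enough that $\tfrac{1}{q}\mathcal{I}_\rho(S_0,V_0,U)\ge\mathcal{I}_\rho(S_0,V_0,K)$. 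Replace your Markov tail estimate with this H\"older estimate on the far event, and the argument closes as in the paper.
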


Any optimal $g$ and $h$ for the variational problem \eqref{I:rho:1} must satisfy the
Euler-Lagrange equations. The solutions of the Euler-Lagrange equations determine stationary points or extremals for the variational problem.

We take into account the constraint on $g$ by introducing the functional
\begin{align}
\Lambda_{\rho}[g,h;\lambda]&:=\frac{1}{2(1-\rho^{2})}\int_{0}^{1}\left(\frac{g'(t)}{\eta(e^{g(t)})\sqrt{e^{h(t)}}}-\frac{\rho h'(t)}{\sigma(e^{h(t)})}\right)^{2}dt
+\frac{1}{2}\int_{0}^{1}\left(\frac{h'(t)}{\sigma(e^{h(t)})}\right)^{2}dt
\nonumber
\\
&\qquad
+\lambda\int_{0}^{1}e^{g(t)}dt
\nonumber
\\
&=\frac{1}{2(1-\rho^{2})}\int_{0}^{1}\left(\frac{g'(t)}{\eta(e^{g(t)})\sqrt{e^{h(t)}}}\right)^{2}dt
+\frac{1}{2(1-\rho^{2})}\int_{0}^{1}\left(\frac{h'(t)}{\sigma(e^{h(t)})}\right)^{2}dt
\nonumber
\\
&\qquad
-\frac{\rho}{1-\rho^{2}}\int_{0}^{1}\frac{g'(t)}{\eta(e^{g(t)})\sqrt{e^{h(t)}}}\frac{h'(t)}{\sigma(e^{h(t)})}dt
+\lambda\int_{0}^{1}e^{g(t)}dt,
\end{align}
where $\lambda$ is the Lagrange multiplier.
The Euler-Lagrange equations satisfied by $g$ and $h$ for the variational problem \eqref{I:rho:1} are given by
$\frac{\partial \Lambda_{\rho}}{\partial g}=\frac{d}{dt}\left(\frac{\partial \Lambda_{\rho}}{\partial g'}\right)$
and $\frac{\partial \Lambda_{\rho}}{\partial h}=\frac{d}{dt}\left(\frac{\partial \Lambda_{\rho}}{\partial h'}\right)$.
This leads to the system of coupled ordinary differential equations for $g$ and $h$
\begin{align}
&\frac{d}{dt}\left(\frac{1}{1-\rho^{2}}\frac{g'(t)}{\eta^{2}(e^{g(t)})e^{h(t)}}
-\frac{\rho}{1-\rho^{2}}\frac{h'(t)}{\eta(e^{g(t)})\sqrt{e^{h(t)}}\sigma(e^{h(t)})}\right)
\nonumber
\\
&=-\frac{1}{1-\rho^{2}}\frac{(g'(t))^{2}\eta'(e^{g(t)})e^{g(t)}}{\eta^{3}(e^{g(t)})e^{h(t)}}
+\frac{\rho}{1-\rho^{2}}\frac{g'(t)h'(t)\eta'(e^{g(t)})e^{g(t)}}{\eta^{2}(e^{g(t)})\sqrt{e^{h(t)}}\sigma(e^{h(t)})}
\nonumber
\\
&\qquad\qquad\qquad
+\lambda e^{g(t)},\label{EL:1}
\end{align}
and
\begin{align}
&\frac{d}{dt}\left(\frac{1}{1-\rho^{2}}\frac{h'(t)}{\sigma^{2}(e^{h(t)})}
-\frac{\rho}{1-\rho^{2}}\frac{g'(t)}{\eta(e^{g(t)})\sqrt{e^{h(t)}}\sigma(e^{h(t)})}\right)
\nonumber
\\
&=-\frac{1}{2(1-\rho^{2})}\frac{(g'(t))^{2}}{\eta^{2}(e^{g(t)})e^{h(t)}}
-\frac{1}{1-\rho^{2}}\frac{(h'(t))^{2}\sigma'(e^{h(t)})e^{h(t)}}{\sigma^{3}(e^{h(t)})}
\nonumber
\\
&\qquad
+\frac{\rho}{1-\rho^{2}}\frac{g'(t)}{\eta(e^{g(t)})\sqrt{e^{h(t)}}}\frac{h'(t)\sigma'(e^{h(t)})e^{h(t)}}{\sigma^{2}(e^{h(t)})}
+\frac{\rho}{2(1-\rho^{2})}\frac{g'(t)}{\eta(e^{g(t)})\sqrt{e^{h(t)}}}\frac{h'(t)}{\sigma(e^{h(t)})},\label{EL:2}
\end{align}
with the constraints $g(0)=\log S_{0}$, $h(0)=\log V_{0}$ and $\int_{0}^{1}e^{g(t)}dt=K$.
The transversality condition gives $g'(1)=h'(1)=0$.

The problem of the existence and uniqueness of the solutions of the Euler-Lagrange equations for arbitrary strike $K$ will not be discussed here. This requires a separate study, under appropriate conditions on the functions $\eta(x),\sigma(v)$. In the next section we restrict to near-ATM strikes $K$, and will prove the existence and uniqueness of the solutions of these equations by direct construction.

\subsection{Expansion in log-moneyness}
\label{sec:3.2}

In general, it is hard to solve the system of coupled differential equations (\ref{EL:1}), (\ref{EL:2}) resulting from the Euler-Lagrange equations for the optimal paths $g(t), h(t)$ in closed form. 
However, we know that for the ATM case, i.e. $K=S_{0}$, we have
$\mathcal{I}_{\rho}(S_{0},V_{0},K)=0$
corresponding to the optimal paths $g(t)\equiv g_{0}:= \log S_{0}$ and $h(t)\equiv h_{0}:=\log V_{0}$. 

Therefore it is reasonable to search for a solution of these equations in an expansion around the ATM point. We present next an expansion for the
rate function $\mathcal{I}_{\rho}(S_{0},V_{0},K)$ following from an expansion of the optimal paths in powers of the log-moneyness $x:=\log(K/S_{0})$. 

We will derive the first three terms in this expansion. The result is formulated in terms of the coefficients in the expansion of the local volatility function around $S_0$
\begin{equation}\label{eta:expansion}
\eta(S) = \eta_0 + \eta_1 \log \frac{S}{S_0} + \eta_2 \log^2 \frac{S}{S_0} + O\left(\log^3(S/S_0)\right),
\end{equation}
and analogous for the expansion of the volatility-of-volatility function around $V_0$
\begin{equation}\label{sigma:expansion}
\sigma(V) = \sigma_0 + \sigma_1 \log \frac{V}{V_0} + \sigma_2 \log^2 \frac{V}{V_0} 
+ O(\log^3(V/V_0)) \,.
\end{equation}
More explicitly, we have 
\begin{equation}\label{etadef}
\eta_0 = \eta(S_0)\,,\quad \eta_1=S_0 \eta'(S_0)\,, \quad
\eta_2 = \frac12 S_0 \eta'(S_0) + \frac12 S_0^2 \eta''(S_0)\,,
\end{equation}
 and analogous for the volatility-of-volatility parameters
 \begin{equation}
 \sigma_0=\sigma(V_0)\,,\quad
 \sigma_1 = V_0 \sigma'(V_0)\,,\quad
 \sigma_2 = \frac12 V_0 \sigma'(V_0) + \frac12 V_0^2 \sigma''(V_0)\,.
 \end{equation}

\begin{proposition}\label{prop:first:order}
Assume that $\eta(x),\sigma(v)$ are twice continuously differentiable such that the expansions (\ref{eta:expansion}) and (\ref{sigma:expansion}) are valid.
Then we have the following expansion of the rate function in powers of 
log-moneyness $x=\log(K/S_0)$:
\begin{align}\label{JAexp}
 \mathcal{I}_{\rho}(S_{0},V_{0},S_{0}e^{x})
& =\frac{3}{2\eta_0^{2} V_{0}}x^{2} 
- \frac{3}{10 \eta_0^3 V_0^{3/2}} \left( 3 \rho \sigma_0 + (\eta_0 + 6 \eta_1) \sqrt{V_0} \right) x^3 \\
& \qquad\qquad\qquad+ %\frac{1}{1400\eta_0^4 V_0^2} 
\frac{\beta_0 V_0 
+ \beta_1 \sigma_0  +  \beta_2  \sigma_0^2}{1400\eta_0^4 V_0^2}  x^4  + O(x^{5}) \,, \nonumber
\end{align}
as $x\rightarrow 0$, with
\begin{align}
& \beta_0 := 109 \eta_0^2 + 2664 \eta_1^2 + 36\eta_0 (13 \eta_1 - 60 \eta_2), \\
& \beta_1 := 18\rho \left(-30 \rho \sigma_1 + (13\eta_0 + 18\eta_1) \sqrt{V_0} \right), \\
& \beta_2 := 9(-25 + 99 \rho^2), 
\end{align}
where $\eta_{0,1,2}$ are defined in \eqref{eta:expansion} and $\sigma_{0,1,2}$ are defined in \eqref{sigma:expansion}.

The optimal $g,h$ that solve the variational problem \eqref{I:rho:1} admit 
the expansion:
\begin{align}
&g(t)=g_{0}(t)+xg_{1}(t)+ x^2 g_2(t) + x^{3}g_{3}(t)+O(x^{4}),\nonumber
\\
&h(t)=h_{0}(t)+xh_{1}(t)+x^2 h_2(t) + x^{3}h_{3}(t)+O(x^{4}),\label{g:h:expansion:first}
\end{align}
where $g_{0}(t)\equiv\log S_{0}$, $h_{0}(t)\equiv\log V_{0}$ and
\begin{align}
g_{1}(t)=\frac{3}{2}(2t-t^{2}),
\qquad
h_{1}(t)=\frac{3\rho\sigma_0}{2\eta_0\sqrt{V_{0}}}(2t-t^{2}),
\qquad
0\leq t\leq 1\,,
\end{align}
and the functions $g_{2,3}(t), h_{2,3}(t)$ are given in the proof of this result and in Appendix~\ref{sec:proofs}. 
\end{proposition}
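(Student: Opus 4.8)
The plan is to solve the Euler--Lagrange boundary value problem \eqref{EL:1}--\eqref{EL:2} perturbatively in $x=\log(K/S_0)$, starting from the known ATM solution $g_0\equiv\log S_0$, $h_0\equiv\log V_0$, $\lambda=0$. Substitute the ansatz $g=g_0+\sum_{k\ge1}x^k g_k$, $h=h_0+\sum_{k\ge1}x^k h_k$, $\lambda=\sum_{k\ge1}x^k\lambda_k$ into \eqref{EL:1}--\eqref{EL:2}, into the endpoint conditions $g(0)=\log S_0$, $h(0)=\log V_0$, into the transversality conditions $g'(1)=h'(1)=0$, and into the constraint $\int_0^1 e^{g(t)}\,dt=S_0e^x$, and collect powers of $x$. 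Dividing the constraint by $S_0$ and expanding the exponentials gives $\int_0^1 g_1=1$ at order $x$, $\int_0^1(g_2+\tfrac12 g_1^2)=\tfrac12$ at order $x^2$, $\int_0^1(g_3+g_1g_2+\tfrac16 g_1^3)=\tfrac16$ at order $x^3$, and so on; the endpoint and transversality conditions become $g_n(0)=h_n(0)=0$ and $g_n'(1)=h_n'(1)=0$ for each $n$. The structural point is that $g'(t),h'(t)=O(x)$, so the quadratic right-hand sides of \eqref{EL:1}--\eqref{EL:2} are $O(x^2)$; hence at each order $n\ge1$ the pair $(g_n,h_n)$ solves a linear system of ODEs whose principal part is the linearization of \eqref{EL:1}--\eqref{EL:2} about the ATM point --- the same operator at every order --- with a constant term proportional to $\lambda_n$ and a source assembled from $g_1,\dots,g_{n-1},h_1,\dots,h_{n-1}$ and the Taylor coefficients $\eta_0,\eta_1,\eta_2,\sigma_0,\sigma_1,\sigma_2$ of \eqref{eta:expansion}--\eqref{sigma:expansion}. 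Since $g_1,h_1$ turn out to be quadratic polynomials, all of these sources are polynomials in $t$, so every $g_n,h_n$ is a polynomial obtained by elementary double integration, fixed by its four endpoint/transversality conditions, with $\lambda_n$ then determined by the order-$n$ constraint.

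At order $x$ the $h$-equation reads $h_1''=\tfrac{\rho\sigma_0}{\eta_0\sqrt{V_0}}g_1''$, which together with $h_1(0)=0$ and $h_1'(1)=g_1'(1)=0$ forces $h_1=\tfrac{\rho\sigma_0}{\eta_0\sqrt{V_0}}g_1$; substituting back into the $g$-equation yields $g_1''=\eta_0^2 V_0 S_0\lambda_1=\text{const}$, and the constraint $\int_0^1 g_1=1$ pins down $g_1(t)=\tfrac32(2t-t^2)$, $h_1(t)=\tfrac{3\rho\sigma_0}{2\eta_0\sqrt{V_0}}(2t-t^2)$ and $\lambda_1=-3/(\eta_0^2V_0S_0)$. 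To assemble the rate function from the $\lambda_n$ I use the envelope identity $\tfrac{d}{dx}\mathcal I_\rho(S_0,V_0,S_0e^x)=-\lambda(x)\,S_0e^x$, valid because the Lagrangian $\Lambda_\rho[g,h]+\lambda\big(\int_0^1 e^{g}-S_0e^x\big)$ is stationary in $g$, $h$ and $\lambda$ at the optimum; thus the $x^2,x^3,x^4$ coefficients of $\mathcal I_\rho$ are read off by integrating $-S_0e^x(\lambda_1 x+\lambda_2 x^2+\lambda_3 x^3+O(x^4))$, and already at this stage $\lambda_1$ reproduces the leading term $\tfrac{3}{2\eta_0^2V_0}x^2$. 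Alternatively, substituting the order-$3$ optimal paths directly into $\Lambda_\rho$ and expanding reproduces \eqref{JAexp} up to $O(x^5)$, since the first variation at the optimum is orthogonal to the constraint direction.

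Carrying out orders $x^2$ and $x^3$ by the same routine --- write the linear system with its now-nonzero polynomial source, integrate twice, impose $g_n(0)=h_n(0)=0$ and $g_n'(1)=h_n'(1)=0$, and read off $\lambda_n$ from $\int_0^1 g_n=c_n$ --- produces the functions $g_2,h_2,g_3,h_3$ recorded in Appendix~\ref{sec:proofs} (with the bulkier polynomials abbreviated through the auxiliary functions of Appendix~\ref{sec:appC}) and the values $\lambda_2,\lambda_3$. Feeding $\lambda_1,\lambda_2,\lambda_3$ into the envelope identity yields the $x^3$ coefficient $-\tfrac{3}{10\eta_0^3V_0^{3/2}}\big(3\rho\sigma_0+(\eta_0+6\eta_1)\sqrt{V_0}\big)$ and the $x^4$ combination $\tfrac{\beta_0V_0+\beta_1\sigma_0+\beta_2\sigma_0^2}{1400\eta_0^4V_0^2}$ with $\beta_0,\beta_1,\beta_2$ as stated. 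Specializing to log-normal SABR ($\eta\equiv 1$, $\sigma$ constant, hence $\eta_1=\eta_2=\sigma_1=\sigma_2=0$) recovers the first two orders of Al\`os et al.\ \cite{Alos2022}, a useful consistency check.

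The scheme is conceptually transparent, and the obstacle is entirely one of bookkeeping at orders $x^2$ and $x^3$: the polynomial sources there involve many products of $g_1,h_1$ (and, at order $3$, also of $g_2,h_2$) combined with all the coefficients $\rho,\eta_0,\eta_1,\eta_2,\sigma_0,\sigma_1,\sigma_2,V_0$, which is precisely why $g_3,h_3$ are long enough to merit the notation of Appendix~\ref{sec:appC}. One point requiring separate attention is the a priori validity of the expansion --- that the constrained minimizer depends smoothly on $x$ near $x=0$, so that the formal series has content --- which one obtains by applying the implicit function theorem to the Euler--Lagrange two-point boundary value problem, using $1-\rho^2>0$ and Assumptions~\ref{assump:bounded}--\ref{assump:LDP} for the required non-degeneracy, together with the fact that the critical point so constructed is the actual minimizer for $|x|$ small.
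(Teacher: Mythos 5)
Your scheme is correct and follows essentially the same route as the paper's proof: substitute the ansatz $g=\sum_k x^k g_k$, $h=\sum_k x^k h_k$, $\lambda=\sum_k x^k\lambda_k$ into the Euler--Lagrange system, collect powers of $x$, integrate the resulting linear ODEs with polynomial sources using the boundary, transversality and normalization conditions, and read off $\lambda_k$ order by order. Your first-order computations ($g_1,h_1,\lambda_1$) agree with the paper's, and the structure you describe at orders $x^2,x^3$ is the same as what the paper carries out.

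The one genuine difference is how the rate function is assembled from the solved paths. The paper substitutes the expanded $g,h$ back into $\Lambda_\rho[g,h]$ and integrates the expansion of the integrand. You instead invoke the envelope identity $\frac{d}{dx}\mathcal I_\rho(S_0,V_0,S_0 e^x) = -\lambda(x)\,S_0 e^x$, which follows because the Lagrangian $\Lambda_\rho[g,h]+\lambda\bigl(\int_0^1 e^g - S_0 e^x\bigr)$ is stationary at the constrained optimum and its only explicit $x$-dependence is through $-\lambda S_0 e^x$. This is a valid and cleaner shortcut: the $x^2$, $x^3$, $x^4$ coefficients of $\mathcal I_\rho$ are $-\tfrac{S_0}{2}\lambda_1$, $-\tfrac{S_0}{3}(\lambda_1+\lambda_2)$ and $-\tfrac{S_0}{4}(\tfrac12\lambda_1+\lambda_2+\lambda_3)$, and one can check these reproduce equation \eqref{JAexp}. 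The envelope route buys you a shorter final step (no expansion of the quadratic integrand in $\Lambda_\rho$), but it requires identical inputs, since $\lambda_3$ is also needed in the paper's approach to pin down $g_3$. Your remark on justifying the smooth dependence of the minimizer on $x$ via the implicit function theorem (using $1-\rho^2>0$) is a sensible point about rigor that the paper leaves implicit as a formal expansion, so it is a strengthening rather than an oversight.
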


\begin{remark}
Although the coefficient $\sigma_1$ in \eqref{sigma:expansion} 
appears in intermediate quantities, such as the function $h_2(t)$, it disappears in the final result for the $O(x^3)$ term in the rate function (\ref{JAexp}). A similar cancellation takes place in the $O(x^4)$ term, which does not depend on $\sigma_2$, although this parameter appears in $h_3(t)$.
\end{remark}

%%%%%%%%%%%%%%%%%%%%%%%%%%%%%%%%%%%%%%%%%%

\subsection{Perfectly correlated and anticorrelated cases}
\label{sec:3.3}

One limiting case where an exact solution of the variational problem for the rate function can be found in closed form is the perfectly correlated case ($\rho = 1$) and perfectly anticorrelated case ($\rho=-1$).
For this case we have the following result.

\begin{proposition}\label{prop:pm:1}
\begin{equation}\label{pm:variational}
\mathcal{I}_{\pm 1}(S_{0},V_{0},K)
=\inf_{\substack{h(0)=\log V_{0}\\
\int_{0}^{1}\mathcal{F}_{\pm}(e^{h(t)})dt=K, h\in\mathcal{AC}[0,1]}}\frac{1}{2}\int_{0}^{1}\left(\frac{h'(t)}{\sigma(e^{h(t)})}\right)^{2}dt,
\end{equation}
where $\mathcal{F}_{\pm}(\cdot)$ is defined as:
\begin{equation}\label{pm:F}
\int_{S_{0}}^{\mathcal{F}_{\pm}(x)}\frac{dy}{y\eta(y)}=\int_{V_{0}}^{x}\frac{\pm dy}{\sqrt{y}\sigma(y)},\qquad\mbox{ for any } x>0 \,.
\end{equation}
\end{proposition}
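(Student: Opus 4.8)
The plan is to exploit the fact that when $\rho=\pm1$ the Brownian motions $B$ and $Z$ in \eqref{eqn:S}--\eqref{eqn:V} coincide up to a sign, so the pair $(S_t,V_t)$ is driven by a single Brownian motion and the short-maturity rate function degenerates to a one-control problem. Concretely, in the functional $\Lambda_\rho$ of \eqref{I:rho:2} the prefactor $\tfrac{1}{1-\rho^2}$ of the first integral forces, at $\rho=\pm1$, that the only paths of finite energy are those satisfying the pointwise constraint
\[
\frac{g'(t)}{\eta(e^{g(t)})\sqrt{e^{h(t)}}}=\pm\,\frac{h'(t)}{\sigma(e^{h(t)})}\qquad\text{for a.e. } t\in[0,1],
\]
call it $(\star)$, and that on such paths the energy collapses to $\tfrac12\int_0^1\bigl(h'(t)/\sigma(e^{h(t)})\bigr)^2\,dt$. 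The proof then has three steps: (i) establish that $\mathcal{I}_{\pm1}(S_0,V_0,K)$ equals the infimum of this reduced energy over $g,h\in\mathcal{AC}[0,1]$ with $g(0)=\log S_0$, $h(0)=\log V_0$, $\int_0^1 e^{g(t)}\,dt=K$, subject to $(\star)$; (ii) integrate $(\star)$ to eliminate $g$ in favour of $h$; (iii) rewrite the strike constraint and read off \eqref{pm:variational}.

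For step (i) I would repeat the argument behind Theorem~\ref{thm:OTM}, now in the degenerate setting: the short-maturity LDP is the push-forward, through the (one-dimensional) control-to-path map, of Schilder's theorem for the single driving Brownian motion, and the contraction principle gives the rate of a path $(g,h)$ as $\inf\{\tfrac12\int_0^1\dot\beta^2\,dt:\ g'=\eta(e^{g})\sqrt{e^{h}}\,\dot\beta,\ h'=\pm\sigma(e^{h})\,\dot\beta\}$, which is finite precisely when $(\star)$ holds and then equals $\tfrac12\int_0^1(h'/\sigma(e^{h}))^2\,dt$; a further contraction in the strike constraint yields (i). Alternatively one can show $\mathcal{I}_{\pm1}=\lim_{\rho\to\pm1}\mathcal{I}_\rho$ and identify the limit by a $\Gamma$-convergence argument: for the upper bound, feed a near-optimal $h$ of the reduced problem — together with the $g$ it determines through \eqref{pm:F} — into $\Lambda_\rho$, whose first integral then equals $\tfrac{1-|\rho|}{2(1+|\rho|)}\int_0^1(h'/\sigma(e^{h}))^2\,dt\to0$; for the lower bound, use the uniform $L^2$ bound on $h'/\sigma(e^{h})$ along a minimizing family (hence equicontinuity and weak-$L^2$ compactness of $h$, and likewise for $g$), the estimate $\int_0^1(\,\cdot\,)^2\,dt=O(1-\rho^2)$ on the first integrand of $\Lambda_\rho$, weak lower semicontinuity of the quadratic energy, and stability of the constraints under uniform convergence.

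Step (ii) is a change of variables. Writing $(\star)$ as $\tfrac{d}{dt}\!\int_{S_0}^{e^{g(t)}}\!\tfrac{dy}{y\eta(y)}=\pm\tfrac{d}{dt}\!\int_{V_0}^{e^{h(t)}}\!\tfrac{dy}{\sqrt y\,\sigma(y)}$ and integrating from $0$ with the initial conditions gives $\int_{S_0}^{e^{g(t)}}\tfrac{dy}{y\eta(y)}=\pm\int_{V_0}^{e^{h(t)}}\tfrac{dy}{\sqrt y\,\sigma(y)}$, i.e.\ $e^{g(t)}=\mathcal{F}_\pm(e^{h(t)})$ by the definition \eqref{pm:F}; here one checks that $\mathcal{F}_\pm$ is well defined, since $s\mapsto\int_{S_0}^s\tfrac{dy}{y\eta(y)}$ is a strictly increasing bijection of $(0,\infty)$ onto $\mathbb{R}$ as $0<\inf\eta\le\sup\eta\le M_\eta$ (Assumptions~\ref{assump:bounded}--\ref{assump:LDP}), and that $t\mapsto\log\mathcal{F}_\pm(e^{h(t)})$ lies in $\mathcal{AC}[0,1]$ whenever $h$ does, using the bounds on $\eta,\sigma$ and the boundedness of $h$ on $[0,1]$. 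For step (iii), under this substitution the constraint $\int_0^1 e^{g(t)}\,dt=K$ becomes $\int_0^1\mathcal{F}_\pm(e^{h(t)})\,dt=K$, the reduced energy is unchanged, and $h\mapsto\bigl(\log\mathcal{F}_\pm(e^{h(\cdot)}),h\bigr)$ is a bijection between the feasible set of \eqref{pm:variational} and that of the reduced problem in (i) (surjectivity because $(\star)$ together with $g(0)=\log S_0$ determines $g$ from $h$), so the two infima coincide and \eqref{pm:variational} follows. The main obstacle is step (i): because $\Lambda_\rho$ is singular at $\rho=\pm1$ one cannot simply substitute, and the reduction must be justified either by redoing the large deviations analysis in the one-noise setting or via the $\Gamma$-limit above; steps (ii)--(iii) are then routine once the regularity is accounted for.
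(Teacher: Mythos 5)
Your proposal follows essentially the same route as the paper: at $\rho=\pm1$ the singular prefactor $1/(1-\rho^2)$ forces the pointwise constraint $(\star)$, which is then integrated against the initial conditions to obtain $e^{g(t)}=\mathcal{F}_\pm(e^{h(t)})$ and substituted into the strike constraint to reduce the problem to \eqref{pm:variational}. The paper handles step (i) heuristically (arguing ``as $\rho\to\pm1$'' and then concluding ``when $\rho=\pm1$'' directly from the form of $\Lambda_\rho$), whereas you correctly flag that $\Lambda_{\pm1}$ itself is not defined and propose to close this gap rigorously via a degenerate one-noise LDP or a $\Gamma$-convergence argument; this is a genuine tightening rather than a different method.
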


We note that the variational problem \eqref{pm:variational}
can be solved analytically. This can be seen by casting it into the same form as the variational problem for 
Asian options in the local volatility model, which was solved explicitly in \cite{PZAsian}.
We reformulate the variational problem \eqref{pm:variational} by defining $g(t)=\log\mathcal{F}_{\pm}(e^{h(t)})$,
% eqref{pm:variational}, 
such that $h(t)=\log\mathcal{F}_{\pm}^{-1}(e^{g(t)})$, where $\mathcal{F}^{-1}_{\pm}$ is the inverse 
function of $\mathcal{F}_{\pm}$
which is monotonic by the definition in \eqref{pm:F} and hence the inverse function $\mathcal{F}^{-1}_{\pm}$ is well-defined.
Then, we can rewrite \eqref{pm:variational} as
\begin{equation}\label{pm:variational:2}
\mathcal{I}_{\pm 1}(S_{0},V_{0},K)
=\inf_{\substack{g(0)=\log\mathcal{F}_{\pm}(V_{0})\\
\int_{0}^{1}e^{g(t)}dt=K,g\in\mathcal{AC}[0,1]}}\frac{1}{2}\int_{0}^{1}\left(\frac{g'(t)}
{\hat{\sigma}_\pm(e^{g(t)})}\right)^{2}dt\,,
\end{equation}
where $\hat{\sigma}_\pm(S):=\frac{1}{S}\mathcal{F}_{\pm}^{-1}(S) |\mathcal{F}'_{\pm}(\mathcal{F}^{-1}_{\pm}(S)) |\sigma(\mathcal{F}^{-1}_{\pm}(S))$ for any $S>0$.
Then, \eqref{pm:variational:2} is exactly the rate function for Asian option
for local volatility models with the local volatility $\hat{\sigma}_\pm(\cdot)$, the spot asset price $\mathcal{F}_{\pm}(V_{0})$ and the strike price $K$; see \cite{PZAsian}.
The variational problem \eqref{pm:variational:2} has been solved analytically in Proposition~8 of \cite{PZAsian}.

%%%%%%%%%%%%%%%%%%%%%%%%
\subsection{Local volatility case}

When $\mu(\cdot)\equiv 0$, $\sigma(\cdot)\equiv 0$, $\rho=0$ and $V_{0}=1$, the model \eqref{eqn:S}-\eqref{eqn:V}
reduces to 
\begin{equation}\label{LV:SDE}
\frac{dS_{t}}{S_{t}}=(r-q)dt+\eta(S_{t})dB_{t},
\end{equation}
which is the local volatility model that was studied in \cite{PZAsian}.
We use the notation $\mathcal{I}_{\mathrm{LV}}(S_{0},K)$ to denote
the rate function in \eqref{I:rho:1} in this case, emphasizing
it corresponds to the local volatility model \eqref{LV:SDE}.
We recall from \eqref{I:rho:1} that
the rate function in \eqref{I:rho:1} is the infimum of $\Lambda_{0}[g,h]$, 
that is given in \eqref{I:rho:2}, 
subject to
$g(0)=\log S_{0}$, $h(0)=0$ and $\int_{0}^{1}e^{g(t)}dt=K$.
When $\sigma(\cdot)\equiv 0$
it follows from \eqref{I:rho:2}
that $\Lambda_{0}[g,h]=+\infty$ unless $h'(t)=0$
for every $0\leq t\leq 1$, which implies that the optimal $h$
is given by $h(t)=0$ for every $0\leq t\leq 1$. 
We conclude that
\begin{equation}\label{rate:local:vol}
\mathcal{I}_{\mathrm{LV}}(S_{0},K)
=\inf_{g(0)=\log S_{0},
\int_{0}^{1}e^{g(t)}dt=K,g\in\mathcal{AC}[0,1]}\frac{1}{2}\int_{0}^{1}\left(\frac{g'(t)}{\eta(e^{g(t)})}\right)^{2}dt,
\end{equation}
which recovers the result in \cite{PZAsian}.
Moreover, the variational problem in \eqref{rate:local:vol}
was solved analytically in \cite{PZAsian}.

We can verify explicitly that the result (\ref{JAexp}) reduces in the limit $\sigma(v) \equiv 0$
to the rate function for Asian options in the local volatility model that was obtained in \cite{PZAsian}.
By taking $\sigma_0 = \sigma_1 = 0$ in (\ref{JAexp}), the rate function becomes
\begin{align}
 \mathcal{I}_{\mathrm{LV}}(S_{0},S_{0}e^{x})
 & =\frac{3}{2\eta_0^{2} }x^{2} 
- \frac{3}{10 \eta_0^3 }  (\eta_0 + 6 \eta_1)   x^3 \\
& \qquad\qquad\qquad+
\frac{ 109 \eta_0^2 + 2664 \eta_1^2 + 36\eta_0 (13 \eta_1 - 60 \eta_2)}{1400\eta_0^4}  x^4  + O(x^{5}) \,. \nonumber
\end{align}
Substituting here the expressions (\ref{etadef}) this becomes
\begin{align}
 &\mathcal{I}_{\mathrm{LV}}(S_{0},S_{0}e^{x})
 \\
 & = \frac{1}{\eta_0^2} \Bigg\{ \frac32 x^2 - \frac{3}{10}\left(1 + 6 S_0 \frac{\eta'(S_0)}{\eta(S_0)}
\right) x^3\nonumber \\
&\qquad\qquad +
\left( \frac{109}{1400} + \frac{333}{175} \left( S_0 \frac{\eta'(S_0)}{\eta(S_0)} \right)^2 -
\frac{153}{350} S_0 \frac{\eta'(S_0)}{\eta(S_0)} - \frac{27}{35} S_0^2  \frac{\eta''(S_0)}{\eta(S_0)} 
 \right) x^4 + O(x^5) \Bigg\}\,. \nonumber
\end{align}

This reproduces precisely the first three terms of the rate function for Asian options in the local volatility model  given in Corollary 16 of \cite{PZAsian}, see equation (40) in \cite{PZAsian}.

%%%%%%%%%%%%%%%%%%%%%%%%%%%%%%%%%%%%%%%%%%
\subsection{ATM Case}

In this section, we consider the ATM case. 
Recall that in the short-maturity limit this corresponds to $K=S_{0}$
for both Asian call and put options.
We have the following result.

\begin{theorem}\label{thm:ATM}
Suppose Assumption~\ref{assump:bounded} holds.
We further assume that $\eta(\cdot)$ and $\sigma(\cdot)$ are Lipschitz
and there exists some $C'\in(0,\infty)$ 
such that $\max_{0\leq t\leq T}\mathbb{E}[(S_{t})^{4}]\leq C'$
for any sufficiently small $T>0$.
When $K=S_{0}$,
\begin{equation}
\lim_{T\rightarrow 0}\frac{C(T)}{\sqrt{T}}
=\lim_{T\rightarrow 0}\frac{P(T)}{\sqrt{T}}
=\frac{S_{0}\eta(S_{0})\sqrt{V_{0}}}{\sqrt{6\pi}}.
\end{equation}
\end{theorem}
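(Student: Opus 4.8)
The plan is to compute the leading-order behavior of the at-the-money Asian call and put prices by a direct probabilistic expansion, rather than through large deviations. Since $K = S_0$ and $F(T) \to S_0$, the averaged quantity $A_T := \frac{1}{T}\int_0^T S_s\,ds$ concentrates at $S_0$ as $T \to 0$, and the payoff $(A_T - K)^+$ is of order $\sqrt{T}$. First I would write $C(T) = e^{-rT}\mathbb{E}[(A_T - S_0)^+]$ and note that $e^{-rT} = 1 + O(T)$ contributes only to higher order, so it suffices to analyze $\mathbb{E}[(A_T - S_0)^+]$. The key step is to show that, after rescaling, $\frac{1}{\sqrt{T}}(A_T - S_0)$ converges in distribution (and with enough uniform integrability to pass to the expectation of the positive part) to a centered Gaussian random variable, and to identify its variance.

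To identify the limiting law, I would expand $S_s = S_0\bigl(1 + \int_0^s (r-q)\,du + \int_0^s \eta(S_u)\sqrt{V_u}\,dB_u + \cdots\bigr)$ for small $s \le T$. The drift term contributes $S_0(r-q)\cdot\frac{1}{T}\int_0^T s\,ds = O(T)$, negligible after dividing by $\sqrt{T}$. The martingale term gives the leading fluctuation:
\begin{equation}
A_T - S_0 \approx \frac{S_0}{T}\int_0^T \int_0^s \eta(S_u)\sqrt{V_u}\,dB_u\,ds = \frac{S_0}{T}\int_0^T (T-u)\,\eta(S_u)\sqrt{V_u}\,dB_u \,,
\end{equation}
by Fubini/stochastic Fubini. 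Since $S_u \to S_0$ and $V_u \to V_0$ uniformly on $[0,T]$ as $T \to 0$ (using the Lipschitz assumptions and boundedness to control the error), I would replace $\eta(S_u)\sqrt{V_u}$ by the constant $\eta(S_0)\sqrt{V_0}$ up to a negligible correction, obtaining $A_T - S_0 \approx \frac{S_0 \eta(S_0)\sqrt{V_0}}{T}\int_0^T (T-u)\,dB_u$. This is a Gaussian random variable with variance $\frac{S_0^2 \eta(S_0)^2 V_0}{T^2}\int_0^T (T-u)^2\,du = \frac{S_0^2 \eta(S_0)^2 V_0 T}{3}$. Hence $\frac{1}{\sqrt{T}}(A_T - S_0) \Rightarrow \mathcal{N}\bigl(0, \tfrac{1}{3}S_0^2\eta(S_0)^2 V_0\bigr)$, and using $\mathbb{E}[(\mathcal{N}(0,\varsigma^2))^+] = \varsigma/\sqrt{2\pi}$ with $\varsigma = S_0\eta(S_0)\sqrt{V_0/3}$ gives the claimed limit $\frac{S_0\eta(S_0)\sqrt{V_0}}{\sqrt{6\pi}}$. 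The put case is identical by symmetry of the Gaussian, which also explains why the call and put limits coincide.

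The main obstacle is making the convergence of $\frac{1}{\sqrt{T}}(A_T - S_0)$ rigorous enough to interchange limit and expectation of $(\cdot)^+$: weak convergence alone is not sufficient, so I would establish uniform integrability of $\frac{1}{\sqrt T}|A_T - S_0|$ via the $L^2$ (or $L^4$) bound coming from the assumed moment control $\max_{0\le t\le T}\mathbb{E}[S_t^4] \le C'$ together with Burkholder--Davis--Gundy and Assumption~\ref{assump:bounded}, yielding $\mathbb{E}\bigl[(A_T - S_0)^2\bigr] = O(T)$. The second delicate point is quantifying the error from freezing the coefficients at their time-zero values: I would bound $\mathbb{E}\bigl[\sup_{0\le u\le T}(S_u - S_0)^2\bigr]$ and $\mathbb{E}\bigl[\sup_{0\le u\le T}(V_u - V_0)^2\bigr]$ by $O(T)$ using standard SDE estimates, then use the Lipschitz continuity of $\eta$ and of $v \mapsto \sqrt{v}$ (the latter on a neighborhood of $V_0$, with a cutoff argument or the uniform lower bound $\inf \sigma > 0$ and boundedness of $\mu$ ensuring $V$ stays away from $0$ with high probability) to show the replacement error in $A_T - S_0$ is $o(\sqrt{T})$ in $L^1$. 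Once these two estimates are in place, the CLT for the stochastic integral $\frac{1}{\sqrt T}\int_0^T (T-u)\,dB_u$ (which is exactly Gaussian, so no CLT is even needed — only its variance) finishes the argument.
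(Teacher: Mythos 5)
Your approach is essentially the paper's: linearize $S_t$ around $(S_0,V_0)$ so that the averaged payoff reduces to the exact Gaussian computation for $\tfrac{1}{T}\int_0^T B_t\,dt$ (variance $T/3$), and control the linearization error in $L^2$ so that it contributes $o(\sqrt T)$. The only organizational difference is that the paper imports the key estimate $\mathbb{E}|S_t-\hat S_t|^2\le C\,T^{3/2}$ with $\hat S_t=S_0+S_0\eta(S_0)\sqrt{V_0}\,B_t$ from \cite{PWZ2024} and passes to the limit via the $1$-Lipschitz property of $x\mapsto x^+$ (giving an $O(T^{3/4})$ bound directly in $L^1$), whereas you sketch a derivation of the same moment bound via BDG/Gronwall and invoke weak convergence plus uniform integrability; the substance is the same.
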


We notice that unlike the OTM case, where the OTM option price decays to zero 
exponentially fast in terms of maturity $T$, for ATM case, the option prices scale
as $\sqrt{T}$ as $T\rightarrow 0$. This scaling is consistent with the ATM case
for Asian options for local volatility models studied in the literature \cite{PZAsian}.

%%%%%%%%%%%%%%%%
\section{Applications to Pricing Asian Options}\label{sec:numerics}

We present in this section the application of the asymptotic results to the practical problem of
pricing Asian options.
This is done most conveniently by introducing the \textit{equivalent log-normal 
volatility of an Asian option} $\Sigma_A(K,T;S_0)$ \cite{PZAsian}.  This is 
defined such that the price of the Asian option $C_A(K,T)$
with strike $K$ and maturity $T$ is
reproduced by substituting $\Sigma_A(K,S_0)$ into the Black-Scholes formula with an appropriately chosen forward price
\begin{equation}
C_A(K,T) = C_{\mathrm{BS}}(K, T;\Sigma_A(K,T),F(T))\,,
\end{equation}
where the forward price of the Asian option is
\begin{equation}
F(T) := \mathbb{E}\left[\frac{1}{T} \int_0^T S_t dt \right] = S_0 \frac{e^{(r-q) T}-1}{r-q} \,,
\end{equation}
and the Black-Scholes formula for the European call option is
\begin{equation}
C_{\mathrm{BS}}(K,T;\sigma,F) = e^{-rT} \left( F \Phi(d_1) - K \Phi(d_2) \right)\,,
\end{equation}
with $d_{1,2} := \frac{1}{\sigma \sqrt{T}} (\log(F/K) \pm \frac12 \sigma^2 T )$ and $\Phi(\cdot)$
the cumulative distribution function of a standard Gaussian distribution with mean $0$ and variance $1$.
A similar result holds for Asian put options
\begin{equation}
P_A(K,T) = P_{\mathrm{BS}}\left(K, T;\Sigma_A(K,T),F(T)\right)\,,
\end{equation}
where the Black-Scholes formula for the European put option is
\begin{equation}
P_{\mathrm{BS}}(K,T;\sigma,F) = e^{-rT} \left( K \Phi(- d_2) - F \Phi( -d_1) \right)\,.
\end{equation}

Note that with this definition the Asian put and call options satisfy the correct put-call parity
\begin{equation}
C_A(K,T) - P_A(K,T) = e^{-rT} (F(T) - K) \,.
\end{equation}

\begin{remark}
There is an alternative way of defining an implied volatility for an Asian option which was introduced in Section~4 of \cite{PZAsian} and was analyzed further in \cite{PZRisk}. This is that constant volatility $\sigma_{\rm imp}(K,T)$ which reproduces the Asian option price when assuming that the underlying asset follows a Black-Scholes model with volatility $\sigma_{\rm imp}(K,T)$. 

For practical applications the use of the implied volatility $\sigma_{\rm imp}(K,T)$ requires pricing an Asian option under the Black-Scholes model, while the equivalent log-normal volatility $\Sigma_A(K,T)$ requires just the Black-Scholes formula for European options, and is thus easier to use. Therefore we will focus in this paper on the equivalent log-normal volatility. Furthermore, their short-maturity limits are related, as shown in Proposition~17 of \cite{PZAsian}.
\end{remark}

For simplicity we will refer to the equivalent log-normal volatility as the implied volatility of an Asian option, since there is no possibility of confusion with the implied volatility of an Asian option $\sigma_{\rm imp}(K,T)$ discussed in \cite{PZAsian}. 

The short-maturity asymptotics for the Asian options studied in this paper translates to a small-time limit for the Asian implied volatility 
\begin{equation}\label{SigAdef}
\lim_{T\to 0} \Sigma_A(K,T) := \Sigma_A(K, S_0) = 
\frac{|\log (K/S_0) |}{\sqrt{2 \mathcal{I}_\rho(S_0,V_0,K)}} \,,
\end{equation}
which is expressed simply in terms of the rate function, as shown. Substituting here the result 
(\ref{JAexp}) for the rate function gives a corresponding expansion for the 
asymptotic Asian implied volatility in powers of log-moneyness
\begin{align}\label{SigAgen}
\Sigma_A(K,S_0) &= \sqrt{\frac{V_0}{3}}
\left\{ 1 + \frac{3\rho \sigma_0 + (\eta_0 + 6\eta_1) \sqrt{V_0}}{10\sqrt{V_0} \eta_0} x 
+  \frac{ \gamma_2 \sigma_0^2 + \gamma_1 \sigma_0 + \gamma_0}{4200 \eta_0^2 V_0}  x^2 + O(x^3) \right\}\,,
\end{align}
with
\begin{align}
\gamma_0 &:= -46\eta_0^2 - 396 \eta_1^2 + 288\eta_0 \eta_1 + 2160 \eta_0 \eta_2\,, \\
\gamma_1 &:= 36\rho \left(15\rho \sigma_1 + 2(2\eta_0 + 27\eta_1) \sqrt{V_0} \right)\,,\\
\gamma_2 &:= 225 - 324 \rho^2 \,.
\end{align}

In conclusion, for practical purposes of pricing an Asian option, knowledge of the rate function 
$\mathcal{I}_\rho(S_0,V_0,K)$ yields, by (\ref{SigAdef}), the small-maturity limit of the equivalent log-normal volatility of the Asian option $\Sigma_A(K,S_0)$ with the appropriate strike $K$. The price of the Asian option is obtained by substituting $\Sigma_A(K,S_0)$ into the Black-Scholes formula. In the remainder of this section we illustrate this application on the example of the log-normal SABR model, and will compare the asymptotic result for $\Sigma_A(K,S_0)$ with the result of a numerical Monte Carlo simulation.

%%%%%%%%%%%%%%%%%%%%%%%%%
\subsection{SABR model}

One of the simplest stochastic volatility models is the log-normal SABR model \
\begin{equation}\label{SABR:model}
\frac{dS_t}{S_t} = (r-q)dt+\sqrt{V_t} dB_t \,,\quad dV_t = \sigma V_t dZ_t\,,
\end{equation}
where $(B_t, Z_t)$ are two standard Brownian motions correlated with correlation $\rho$, $\sigma$ is a positive constant, 
$r$ is the risk-free rate and $q$ is the dividend yield.
We discuss in this section the implications of our results for the short-maturity asymptotics of Asian options in the log-normal SABR model.

The expansion of the Asian rate function in the SABR model can be obtained from Proposition \ref{prop:first:order} by taking $\eta_0=1, \eta_k=0$ and $\sigma_0=\sigma, \sigma_k=0$ for all 
$k\geq 1$.  The first three terms are
\begin{align}\label{JA:SABR}
\mathcal{I}_{\rho}(S_0,V_0,S_0e^x) 
&= \frac{3}{2V_0} x^2 - \frac{3(3\rho \sigma + \sqrt{V_0})}{10 V_0^{3/2}} x^3 
\nonumber
\\
&\qquad\qquad\qquad
+ \frac{9( -25  + 99\rho^2) \sigma^2 + 234 \rho \sigma \sqrt{V_0} + 109 V_0}
{1,400 V_0^2}x^{4} +
O(x^5).
\end{align}
The short-maturity limit of the Asian implied volatility is obtained from (\ref{SigAgen})
\begin{align}
\Sigma_A(K,S_0) &= \frac{\sqrt{V_0}}{\sqrt3} \Bigg( 1 + \left(\frac{1}{10}
+ \frac{3\sigma \rho}{10 \sqrt{V_0}} \right) x \nonumber
\\
&\qquad\qquad
+ \frac{9(25-36\rho^2) \sigma^2 + 144 \rho \sigma \sqrt{V_0} - 46 V_0}{4,200 V_0} x^2 +
O(x^3) \Bigg).\label{SABRSigA}
\end{align}
The $O(x)$ term gives the ATM skew of the Asian implied volatility, and agrees with the result 
in Al\'os et al. \cite{Alos2022}.
The $O(x^2)$ term gives the ATM convexity of the Asian implied volatility. It reproduces the result in \cite{PZAsian} for the Black-Scholes model by taking 
$\sigma = 0$.
Recall that the expansion of the Asian implied volatility in the Black-Scholes model is (see equation (55) in \cite{PZAsian})
\begin{equation}
\Sigma_A^{\mathrm{BS}}(K) = \frac{1}{\sqrt3} \sqrt{V_0} \left( 1 + \frac{1}{10} x - \frac{23}{2100} x^2 + O(x^3)
\right)\,. 
\end{equation}
The coefficient of the $O(x^2)$ term in this expression is indeed 
reproduced by taking $\sigma=0$ in (\ref{SABRSigA}).

We verify next that the perfectly correlated and correlated cases $\rho = \pm 1$ are correctly reproduced by Proposition~\ref{prop:pm:1}. According to this result, the rate function for $\rho = \pm 1$ is the same as the rate function for Asian options in a local volatility model with local volatility: 
\begin{equation}
\hat{\sigma}_\pm(S):=\frac{1}{S}\mathcal{F}_{\pm}^{-1}(S) | \mathcal{F}'_{\pm}(\mathcal{F}^{-1}_{\pm}(S)) |\sigma(\mathcal{F}^{-1}_{\pm}(S))\,,
\end{equation}
where $\mathcal{F}_\pm(S)$ is defined in (\ref{pm:F}). For the SABR model we take $\eta(\cdot)\equiv 1$ and $\sigma(\cdot) \equiv \sigma$ in this relation, which gives
\begin{equation}
\mathcal{F}_\pm(x) = S_0 e^{\pm \frac{2}{\sigma} (\sqrt{x} - \sqrt{V_0})}\,,
\end{equation}
and a straightforward computation gives
\begin{equation}
\hat \sigma_\pm(S) = \left| \sqrt{V_0} \pm \frac{\sigma}{2} \log(S/S_0) \right| \,.
\end{equation}

The expansion of the rate function for Asian options 
in the local volatility model is given in Corollary~16 of \cite{PZAsian}; see equation (40). The first three terms depend only on the quantities
\begin{equation}
\hat \sigma_\pm(S_0) = \sqrt{V_0}\,,\quad
S_0 \hat \sigma'_\pm(S_0) = \pm \frac{\sigma}{2} \,,\quad
S_0^2 \hat \sigma''_\pm(S_0) =  \mp \frac{\sigma}{2}\,.
\end{equation}
Substituting these values into equation (40) of \cite{PZAsian} gives the rate function:
\begin{align}
\mathcal{I}_{\pm 1}(S_0,V_0, S_0 e^x) &=
\frac{1}{V_0} \Bigg\{ \frac32 x^2 + \left( - \frac{3}{10} \mp \frac{9\sigma}{10\sqrt{V_0}}\right) x^3 
\nonumber
\\
&\qquad\qquad\qquad\qquad
+ \left( \frac{109}{1400} \pm \frac{117 \sigma}{700\sqrt{V_0}} + \frac{333\sigma^2}{700 V_0}\right) x^4 + O(x^5) \Bigg\}\,,
\end{align}
which agrees with the limiting values of (\ref{JA:SABR}) for $\rho = \pm 1$.

%%%%%%%%%%%%%%%%%%%%%%%%
\subsection{Heston model}

In this section we consider the asymptotic predictions for Asian options in the Heston model
\begin{equation}
\frac{dS_t}{S_t} = (r-q)dt+\sqrt{V_t} dB_t \,,\quad dV_t = \kappa (\theta - V_t) dt + \xi \sqrt{V_t} dZ_t\,,
\end{equation}
where $(B_t, Z_t)$ are correlated standard Brownian motions with correlation $\rho$,  
$r$ is the risk-free rate, $q$ is the dividend yield,
and $\kappa,\theta,\xi$ are positive constants.

For this model we have $\sigma(v) = \xi v^{-1/2}$, and thus we get from \eqref{sigma:expansion} that
\begin{equation}
\sigma_0 = \xi/\sqrt{V_0}\,,\quad 
\sigma_1 = - \xi/(2\sqrt{V_0}) \,.
\end{equation}

Although $\sigma(v)$ is not bounded and Lipschitz, and does not satisfy our Assumptions~\ref{assump:bounded} and \ref{assump:LDP}, it was shown by Robertson (2010) \cite{Robertson2010} that this is not required for the existence of a large deviation principle for $\mathbb{Q}(\{(S_{tT}, V_{tT}),0\leq t\leq 1\}\in\cdot)$.
The sample-path large deviation principle for the joint distribution $\mathbb{Q}(\{(S_{tT}, V_{tT}),0\leq t\leq 1\}\in\cdot)$
was established in \cite{Robertson2010}
and is based on an earlier result that $\mathbb{Q}(\{V_{tT},0\leq t\leq 1\}\in\cdot)$ satisfies a large deviation principle, which was obtained in Donati-Martin et al. (2004) \cite{DonatiMartin2004}.
Moreover, for the Heston model, one can show that Assumption~\ref{assump:S:T:p} still holds, 
which implies that Theorem~\ref{thm:OTM} is still valid and hence the subsequent results including Proposition~\ref{prop:first:order}.

From Proposition~\ref{prop:first:order}, we deduce that the 
rate function for Asian options with strike $K$ is given by:
\begin{align}\label{JA:Heston}
\mathcal{I}_\rho(S_0,V_0,K) &= \frac{3}{2V_0} x^2 - \frac{3}{10 V_0^2} ( 3\rho \xi + V_0) x^3\nonumber \\
&\quad + \frac{1}{1400 V_0^3} \left( 109 V_0^2 + 18\rho \xi (15\rho \xi + 13 V_0) + 9\xi^2 (-25 + 99\rho^2) \right) x^2 + O(x^3)  \,. 
\end{align}
It follows that the implied volatility of an Asian option in the Heston model is given by:
\begin{align}
\Sigma_A(K) & = \sqrt{\frac{V_0}{3}}
\Bigg( 1 + \frac{V_0 + 3 \rho \xi}{10 V_0} x \nonumber\\
&\qquad\qquad\qquad+ \frac{1}{4200 V_0^2}
\Big( 46 V_0^2 - 144 \rho \xi V_0 - 225 \xi^2
+ 594 \rho^2 \xi^2 \Big) x^2 + O(x^3) \Bigg)\,. \label{HestonSigA}
\end{align}

This result is reproduced by Proposition~\ref{prop:pm:1} for the 
 perfectly correlated and correlated cases $\rho = \pm 1$.
 For the Heston model we take $\eta(x)\equiv 1$ and $\sigma(v) \equiv \sigma v^{-1/2}$ in the relation (\ref{pm:F}), which gives
\begin{equation}
\mathcal{F}_\pm(x) = S_0 e^{\pm \frac{2}{\sigma} (x - V_0)}\,,
\end{equation}
Substituting into the definition of $\hat\sigma_\pm(S)$ we get
\begin{equation}
\hat \sigma_\pm(S) = \sqrt{V_0 \pm \frac{\sigma}{2} \log(S/S_0)} \,.
\end{equation}

The expansion of the rate function for Asian options 
in the local volatility model is given in Corollary~16 of \cite{PZAsian}; see equation (40). The first three terms depend only on the quantities
\begin{equation}
\hat \sigma_\pm(S_0) = \sqrt{V_0}\,,\quad
S_0 \hat \sigma'_\pm(S_0) = \pm \frac{\sigma}{2\sqrt{V_0}} \,,\quad
S_0^2 \hat \sigma''_\pm(S_0) =  \mp \frac{\sigma}{2 V_0^{3/2}} \Big( V_0 \pm \frac12 \sigma \Big)\,.
\end{equation}
Substituting into equation (40) of \cite{PZAsian} gives the first three terms of the rate function
\begin{align}
\mathcal{I}_{\pm 1}(S_0, V_0, S_0 e^x) &= \frac{3}{2V_0} x^2 + 
\frac{1}{V_0} \Big( -\frac{3}{10} \mp \frac{9\sigma}{10 V_0} \Big) x^3 \\
& + \frac{1}{V_0} \Big( \frac{109}{1400} \pm \frac{117}{700} \cdot \frac{\sigma}{V_0} + \frac{117}{175} \cdot \frac{\sigma^2}{V_0^2}\Big) x^4 + O(x^5)
\end{align}
It is easy to see that this is reproduced by taking $\rho = \pm 1$ in the general expression (\ref{JA:Heston}) for the Asian
rate function in the Heston model.

%%%%%%%%%%%%%%%%%%%%%%%%%%%%%%%%%%%%%%%%%%%%%%
\subsection{Numerical tests}

For the numerical tests we will compare the results of a Monte Carlo (MC) simulation of 
the equivalent log-normal Asian volatility $\Sigma_A(K,S_0)$ with the asymptotic prediction.
We define a quadratic approximation for the Asian options implied volatility using the asymptotic predictions for the ATM volatility level $\Sigma_{\mathrm{ATM}}$, skew $s_A$ and convexity $\kappa_A$ of the Asian implied volatility
\begin{equation}\label{Th}
\Sigma_{A}(x) = \Sigma_{\mathrm{ATM}} + s_A x + \kappa_A x^2\,,\qquad x = \log(K/S_0) \,.
\end{equation}
These parameters were given above for the SABR and Heston model, and can be extracted from equation (\ref{SigAgen}) for a general LSV model.
We present in this section numerical tests for the asymptotic implied volatility of Asian options
for the log-normal SABR model, the Heston model, and a local-stochastic volatility model with log-normal volatility.
\vspace{0.5cm}

\textbf{Log-normal SABR model.}
For the numerical tests we assume the following SABR parameters
\begin{equation}\label{params2}
\sigma = 2.0\,,\quad V_0 = 0.1 \,.
\end{equation}
The correlation $\rho$ will be varied in the range $\{ -0.7, 0, +0.7\}$. 
The spot asset price is $S_0=1$, the risk-free rate is $r=0$ and the dividend yield is $q=0$. 

The coefficients $\Sigma_{\mathrm{ATM}}$, $s_A$ and convexity $\kappa_A$ for the 
log-normal SABR model \eqref{SABR:model} are shown in equation (\ref{SABRSigA}).
The numerical values of these coefficients are shown in Table \ref{tab:3models}.

We compare the asymptotic implied volatility $\Sigma_A(K)$ against a Monte Carlo simulation of Asian options in the SABR model. 
The MC simulation used the number of paths $N_{\mathrm{MC}}=100k$ and a time-discretization using $n=200$ time steps. The $V_t$ process is simulated exactly as a geometric Brownian motion, and $S_t$ is simulated using a forward-stepping Euler discretization.

The results are shown in Figure~\ref{Fig:1} where 
the asymptotic approximation (\ref{Th}) is shown as the solid curve and 
the red dots show the results of the MC simulation for Asian options with maturity 
$T=1/52$ (1 week). The agreement is reasonably good for strikes sufficiently close to the ATM point.

\begin{figure}[h]
\centering
\includegraphics[width=1.9in]{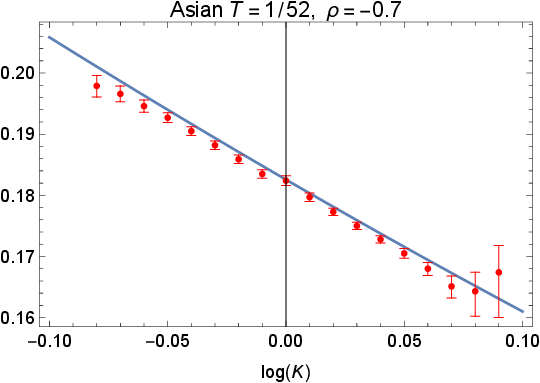}
\includegraphics[width=1.9in]{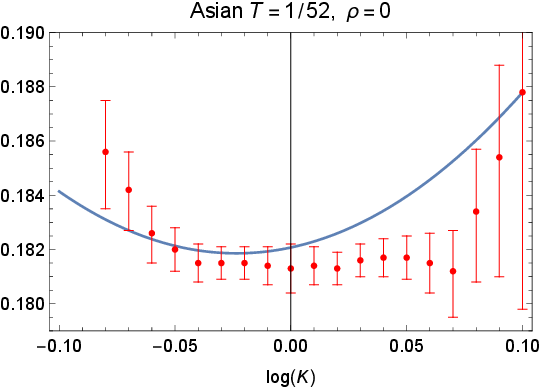}
\includegraphics[width=1.9in]{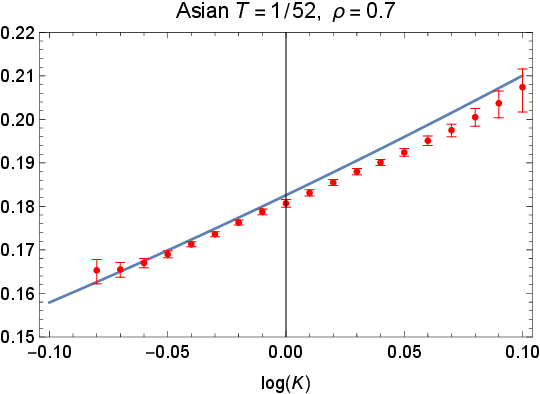}
\caption{Numerical tests for Asian options pricing in the SABR model with parameters 
(\ref{params2}) and correlation
$\rho\in \{ - 0.7, 0, +0.7\}$. The solid curve is the asymptotic prediction (\ref{Th}) for the implied volatility vs $x=\log(K/S_0)$ and
the red dots show the MC result for Asian options with maturity $T=1/52$ (1 week).}
\label{Fig:1}
\end{figure}

\begin{table}[h!]
  \centering
\begin{tabular}{|c|ccc||ccc||ccc|}
\hline
\multicolumn{1}{|c|}{}  & \multicolumn{3}{|c||}{SABR} & \multicolumn{3}{|c||}{Heston} 
                                   & \multicolumn{3}{|c|}{Tanh} \\    
\cline{2-10}    
%\hline
$\rho$ & $\Sigma_{\mathrm{ATM}}$ & $s_A$ & $\kappa_A$ 
           & $\Sigma_{\mathrm{ATM}}$ & $s_A$ & $\kappa_A$ 
           & $\Sigma_{\mathrm{ATM}}$ & $s_A$ & $\kappa_A$  \\
    \hline\hline
$-0.7$ & 0.183 & $-0.224$ & 0.085 
           & 0.115  & $-0.110$  & 0.060 
           & 0.183 & $-0.279$ & 0.149 \\
0 &    0.183 & 0.018 & 0.389
      & 0.115 & 0.011 & $-0.153$ 
      & 0.183 & $-0.036$ & 0.266 \\
$+0.7$ & 0.183 & 0.261 & 0.141
           & 0.115 & 0.133 & 0.033
           & 0.183 & 0.206 & $-0.170$  \\
\hline
\end{tabular}%
 \caption{The parameters for the short-maturity asymptotics of the Asian options for the three models used for the numerical tests. }
\label{tab:3models}%
\end{table}%

%\section*{Numerical tests: Heston model}

\textbf{Heston model.}
For the numerical tests in the Heston model we use the following parameter values
\begin{equation}\label{Hparams}
\kappa = 2.0\,,\quad
\theta = 0.09\,,\quad
\xi = 0.2\,,\quad V_0 = 0.04\,.
\end{equation}
The correlation will be varied in the range $\rho \in \{-0.7,0,+0.7\}$. The spot asset price is $S_0=1$, the risk-free rate is $r=0$ and the dividend yield is $q=0$.
The coefficients $\Sigma_{\mathrm{ATM}}$, $s_A$ and $\kappa_A$ in the quadratic volatility approximation (\ref{Th}) for the Heston model are shown in equation (\ref{HestonSigA}).

%We use the same parameters for the MC simulation: the number of MC paths is 100k, and the timeline is discretized with $n=200$ time steps. Both the $V_t$ and $S_t$ processes are simulated by a forward-stepping Euler discretization. 

The test results are shown in Figure~\ref{Fig:Heston}, where the solid curve is the asymptotic prediction for the Asian implied volatility and the red dots show the results of a MC simulation for Asian options with maturity $T=1/52$ (1 week). The asymptotic prediction is in good agreement with the MC simulation for strikes not too far away from the ATM point $x=0$. The parameters of the MC simulation are the same as for the SABR model. 
\vspace{0.5cm}

\begin{figure}[h]
\centering
\includegraphics[width=1.9in]{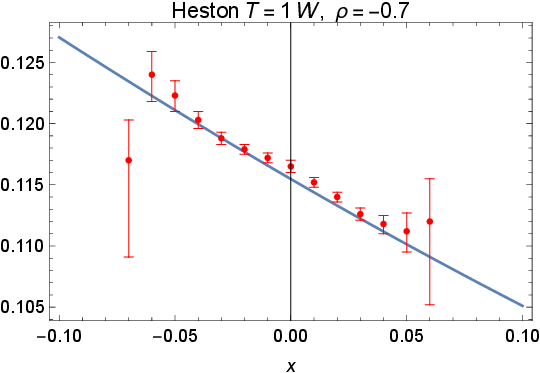} 
\includegraphics[width=1.9in]{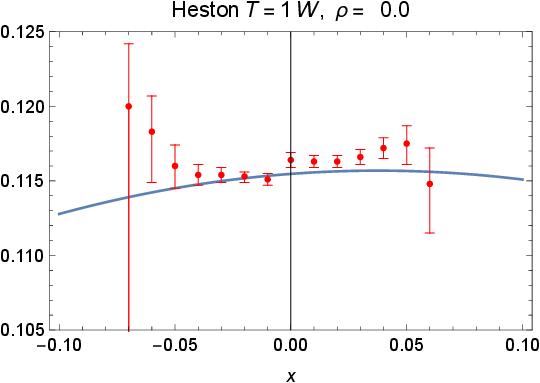}
\includegraphics[width=1.9in]{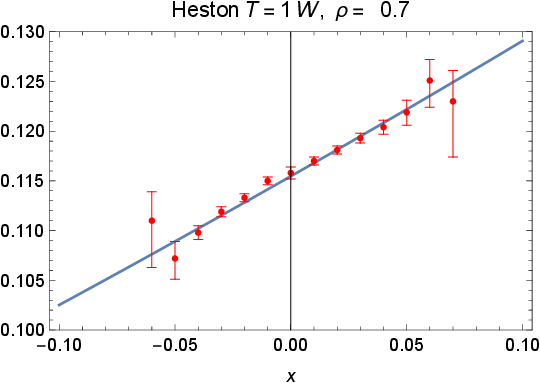}
\caption{Numerical tests for Asian options pricing in the Heston model with parameters (\ref{Hparams}) 
and correlation
$\rho \in \{ - 0.7, 0, +0.7\}$. The solid curve is the asymptotic prediction for the implied volatility vs $x = \log(K/S_0)$ and
the red dots show the MC result for Asian options with maturity $T=1/52$ (1 week).}
\label{Fig:Heston}
\end{figure}

\textbf{Tanh model.} Finally, we present also tests in a local-stochastic volatility model of the form:
\begin{equation}
dS_t/S_t = (r-q)dt+\eta(S_t) \sqrt{V_t} dB_t \,,\qquad dV_t/V_t= \sigma dZ_t\,,
\end{equation}
where $(B_t, Z_t)$ are correlated with correlation $\rho$, $r$ is the risk-free rate, $q$ is the dividend yield and $\sigma$ is a positive constant. The local volatility function is 
\begin{equation}\label{loc:vol:eta:S} 
\eta(S) = f_0 + f_1 \tanh(\log (S/S_0) - x_0) \,.
\end{equation}

This is the so-called Tanh model used by Forde and Jacquier \cite{Forde2011}. 
The local volatility function \eqref{loc:vol:eta:S} is expanded in powers of the log-asset 
$\log(S/S_0)$ as
\begin{equation}
\eta(S) = \eta_0 + \eta_1 \log\frac{S}{S_0} +  \eta_2 \log^2\frac{S}{S_0} + \cdots\,,
\end{equation}
with 
\begin{eqnarray}
\eta_0 = f_0 - f_1 \tanh x_0\,, \quad
\eta_1 = \frac{f_1}{\cosh^2 x_0}\,,  \quad
\eta_2 = \frac{f_1}{\cosh^2 x_0} \tanh x_0 \,.
\end{eqnarray}

We take the parameters $f_0=1,f_1=-0.5, x_0=0$.
For the $V_t$ process we assume the same parameters as above
$\sigma = 2.0$, $V_0 = 0.1$. The spot asset price is $S_0=1$, the risk-free rate is $r=0$ and the dividend yield is $q=0$.
The remaining parameters of the MC simulation are the same as in the other tests.

\begin{figure}[h]
\centering
\includegraphics[width=1.5in]{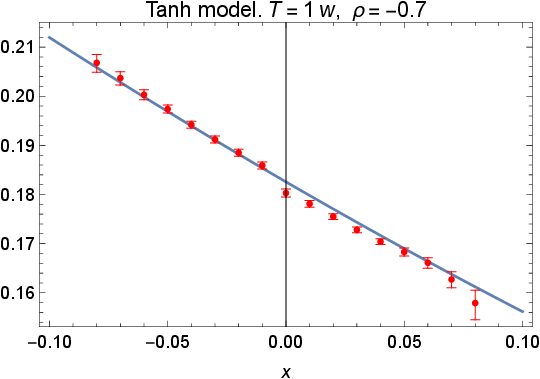}
\includegraphics[width=1.5in]{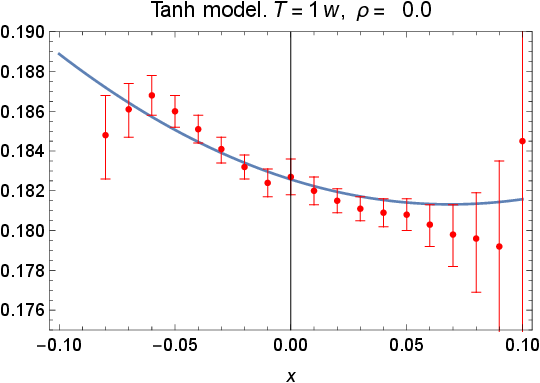}
\includegraphics[width=1.5in]{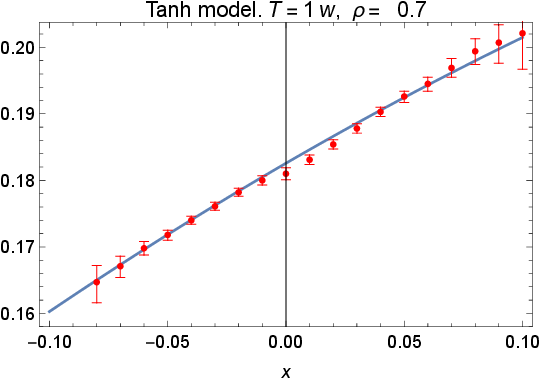}
\caption{Tests for Asian options with maturity $T=1/52$ (1 week) in the Tanh LSV model.}
\label{Fig:Tanh}
\end{figure}

The coefficients $\Sigma_{\mathrm{ATM}}, s_A,\kappa_A$ of the asymptotic Asian implied 
volatility for this model are taken from the general result (\ref{SigAgen}). 
The MC simulation results are shown in Figure \ref{Fig:Tanh} for Asian options with maturity $T=1/52$ 
(1 week) and $\rho \in \{ -0.7, 0, +0.7\}$. The results agree well with the asymptotic predictions (solid curve), which shows that the local volatility feature of the model is captured well by the asymptotic result.

%%%%%%%%%%%%%%%%%%
\section*{Acknowledgements}
We are grateful to anonymous referees for their helpful comments.
Lingjiong Zhu is partially supported by the grants NSF DMS-2053454, NSF DMS-2208303.

%%%%%%%%%%%%%%%%%%%%%%%%
\bibliographystyle{plain}
\bibliography{ShortMaturityAsian}

\appendix

%%%%%%%%%%%%%%%%%%%%%%%%%%%%%%%

\section{Background on Large Deviations Theory}\label{sec:LDP}

We give in this Appendix a few basic concepts of large deviations theory from probability theory
which are in the proofs.
We refer to Dembo and Zeitouni (1998) \cite{Dembo1998} and Varadhan (1984) \cite{VaradhanLD} for more details on large deviations and its applications.

\begin{definition}[Large Deviation Principle]
A sequence $(P_\epsilon)_{\epsilon \in \mathbb{R}^+}$ of probability measures
on a topological space $X$ satisfies the large deviation principle with rate function $I: X \to \mathbb{R}$
if $I$ is non-negative, lower semicontinuous and for any measurable set $A$, we have
\begin{equation}
- \inf_{x\in A^o} I(x) \leq \liminf_{\epsilon\to 0} \epsilon \log P_\epsilon(A) \leq
\limsup_{\epsilon\to 0} \epsilon \log P_\epsilon(A) \leq - \inf_{x\in \bar A} I(x) \,,
\end{equation}
where $A^o$ denotes the interior of $A$ and $\bar A$ its closure.
\end{definition}

\begin{theorem}[Contraction Principle, see e.g. Theorem 4.2.1. in \cite{Dembo1998}]\label{Contraction:Thm}
If $F:X\rightarrow Y$ is a continuous map and 
$P_{\epsilon}$ satisfies a large deviation principle on $X$ with the rate 
function $I(x)$,
then the probability measures $Q_{\epsilon}:=P_{\epsilon}F^{-1}$ satisfies
a large deviation principle on $Y$ with the rate function
$J(y)=\inf_{x: F(x)=y}I(x)$.
\end{theorem}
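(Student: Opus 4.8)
The plan is to follow the standard three-part recipe for verifying a large deviation principle: first check that the candidate $J$ is a legitimate (good) rate function, then establish the large deviation upper bound on closed sets, and finally the lower bound on open sets. The algebraic engine is the elementary set identity $F^{-1}(A)=\bigcup_{y\in A}F^{-1}(y)$, which gives, for every $A\subseteq Y$,
\begin{equation}
\inf_{x\in F^{-1}(A)} I(x)=\inf_{y\in A}\inf_{x\in F^{-1}(y)} I(x)=\inf_{y\in A} J(y),
\end{equation}
with the convention $\inf\emptyset=+\infty$ whenever $F^{-1}(y)=\emptyset$. Combined with the fact that continuity of $F$ sends closed sets to closed preimages and open sets to open preimages, this identity transports both LDP bounds from $X$ to $Y$ almost mechanically. (As in the Dembo--Zeitouni formulation, I take $I$ to be a \emph{good} rate function, i.e. with compact level sets; this hypothesis is used only in the first step.)

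First I would verify that $J$ is a good rate function. Non-negativity is immediate from $I\geq 0$. For the level sets, I claim $\{y:J(y)\leq\alpha\}=F(\{x:I(x)\leq\alpha\})$. The inclusion $\supseteq$ is clear from the definition of $J$. For $\subseteq$, suppose $J(y)\leq\alpha$; then $F^{-1}(y)$ is nonempty and closed, and $\{x\in F^{-1}(y):I(x)\leq\alpha+1\}$ is a nonempty set which is the intersection of the closed set $F^{-1}(y)$ with the compact level set $\{I\leq\alpha+1\}$, hence compact; the lower semicontinuous $I$ attains its infimum $J(y)\leq\alpha$ on it, so there is $x^*$ with $F(x^*)=y$ and $I(x^*)\leq\alpha$, i.e. $y\in F(\{I\leq\alpha\})$. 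Since $\{I\leq\alpha\}$ is compact and $F$ continuous, its image is compact, so $\{y:J(y)\leq\alpha\}$ is compact; in particular $J$ is lower semicontinuous, as required.

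Next I would prove the upper bound. For closed $C\subseteq Y$, the set $F^{-1}(C)$ is closed in $X$ by continuity of $F$, so the upper bound for $P_\epsilon$ together with the displayed identity gives
\begin{equation}
\limsup_{\epsilon\to 0}\epsilon\log Q_\epsilon(C)=\limsup_{\epsilon\to 0}\epsilon\log P_\epsilon\big(F^{-1}(C)\big)\leq-\inf_{x\in F^{-1}(C)}I(x)=-\inf_{y\in C}J(y).
\end{equation}
Symmetrically, for open $G\subseteq Y$ the set $F^{-1}(G)$ is open, and the lower bound for $P_\epsilon$ yields $\liminf_{\epsilon\to 0}\epsilon\log Q_\epsilon(G)\geq-\inf_{x\in F^{-1}(G)}I(x)=-\inf_{y\in G}J(y)$. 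These two estimates, together with the rate-function property of $J$ established above, are precisely the large deviation principle for $Q_\epsilon$ with rate function $J$.

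The only genuinely delicate point is showing that $J$ is lower semicontinuous, equivalently that its level sets are closed: this is exactly where the goodness of $I$ enters, both to guarantee the infimum defining $J(y)$ is attained and to push compactness forward through $F$. If $I$ were only assumed lower semicontinuous without compact level sets, $J$ need not be lower semicontinuous, so this assumption cannot be removed. Everything else in the argument is routine bookkeeping with preimages and the monotonicity of infima.
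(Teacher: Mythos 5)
Your proof is correct and is essentially the standard argument for the contraction principle as given in Dembo--Zeitouni (Theorem 4.2.1), which the paper cites without reproducing a proof; there is therefore nothing in the paper to compare against beyond the reference itself. Your one substantive addition is the explicit observation that $I$ must be a \emph{good} rate function (compact level sets) for $J$ to inherit lower semicontinuity --- this hypothesis is indeed present in the cited theorem but is omitted from the paper's restatement, and your remark that the conclusion can fail without it is accurate and worth having on record.
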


%%%%%%%%%%%%%%%%%%%%%%

\section{Floating-Strike Asian Options}\label{sec:floating}

%There are many variations of the standard Asian options in the finance literature
%and one of the most used is the so-called floating-strike Asian options \cite{Levy,Ritchken,Alziary,RogersShi,HW}
%(in contrast to the fixed-strike Asian options defined in \eqref{asian:price:defn}), 

An alternative variant of Asian options traded in the markets are the so-called 
\textit{floating-strike Asian options}. In contrast to the \textit{fixed-strike Asian options} with payoffs defined as in \eqref{payoff}, these options pay $(\theta(\kappa S_T - \frac{1}{T} \int_0^T S_t dt))^+$, 
where $\kappa>0$ is the strike and $\theta = +1(-1)$ for a call (put) option.
Thus, the prices of floating-strike Asian options are given by
\begin{align}
&C_f(T):=e^{-rT}\mathbb{E}\left[\left(\kappa S_{T}-\frac{1}{T}\int_{0}^{T}S_{t}dt\right)^{+}\right],
\\
&P_f(T):=e^{-rT}\mathbb{E}\left[\left(\frac{1}{T}\int_{0}^{T}S_{t}dt-\kappa S_{T}\right)^{+}\right],
\end{align} 
where $S_{t}$ is the asset price that satisfies \eqref{eqn:S}.

The floating-strike Asian option is more difficult to price than the fixed-strike case
because the joint law of $S_{T}$ and $\frac{1}{T}\int_{0}^{T}S_{t}dt$ is needed.
Alziary et al. (1997) \cite{Alziary} give relations among the two types of Asian options and European options. 
For the particular case of the Black-Scholes model, a simple equivalence relation was
proved by Henderson and Wojakowski (2002) 
\cite{HW} among fixed-strike and floating-strike Asian options.

When $\kappa<1$, the call option is OTM, the put option is ITM;
when $\kappa>1$, the call option is ITM, the put option is OTM;
when $\kappa=1$, the call/put options are ATM. We are interested
in the short-maturity (i.e. $T\rightarrow 0$) asymptotics of these options.
We give next the leading short-maturity asymptotics for floating-strike Asian options in the local-stochastic volatility model under the technical assumptions in Section \ref{sec:model}.

We consider first the case of OTM floating-strike Asian options. 

\begin{theorem}\label{thm:OTM:floating}
Suppose Assumptions~\ref{assump:bounded}, \ref{assump:LDP} and \ref{assump:S:T:p} hold.
When $\kappa<1$, the floating-strike Asian call options are OTM and we have
\begin{equation}
\lim_{T\rightarrow 0}T\log C_{f}(T)=-\mathcal{F}_{\rho}(S_{0},V_{0},\kappa),
\end{equation}
and when $\kappa>1$, 
the floating-strike Asian put options are OTM and we have
\begin{equation}
\lim_{T\rightarrow 0}T\log P_{f}(T)=-\mathcal{F}_{\rho}(S_{0},V_{0},\kappa),
\end{equation}
where
\begin{align}
\mathcal{F}_{\rho}(S_{0},V_{0},\kappa)
:=\inf_{\substack{g(0)=\log S_{0},h(0)=\log V_{0}\\
\int_{0}^{1}e^{g(t)}dt=\kappa e^{g(1)},g,h\in\mathcal{AC}[0,1]}}\Lambda_{\rho}[g,h],\label{I:rho:floating}
\end{align}
where $\Lambda_{\rho}[g,h]$ is defined in \eqref{I:rho:2}.
\end{theorem}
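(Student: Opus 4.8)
The plan is to mirror the argument used for the fixed-strike case (Theorem~\ref{thm:OTM}), since the only essential change is the functional to which the contraction principle is applied. The key input is the small-time sample-path large deviation principle: under Assumptions~\ref{assump:bounded} and \ref{assump:LDP}, the laws of the rescaled log-paths $(\log S_{tT},\log V_{tT})_{0\le t\le 1}$, viewed as random elements of the space of continuous paths started at $(\log S_0,\log V_0)$, satisfy a large deviation principle at speed $T$ with good rate function $\Lambda_\rho[g,h]$ as in \eqref{I:rho:2} (equal to $+\infty$ off $\mathcal{AC}[0,1]$ and off the prescribed initial data). This is exactly the ingredient established for the proof of Theorem~\ref{thm:OTM} in Appendix~\ref{sec:proofs}, and it applies verbatim here since the hypotheses are the same.

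After the time substitution $t=uT$ the floating-strike payoff becomes a continuous functional of the rescaled path: $\kappa S_T-\frac1T\int_0^T S_t\,dt=\Psi\big((\log S_{uT})_{0\le u\le 1}\big)$, where $\Psi(g):=\kappa e^{g(1)}-\int_0^1 e^{g(u)}\,du$, which is continuous in the uniform topology. By the contraction principle (Theorem~\ref{Contraction:Thm}), the random variables $Y_T:=\kappa S_T-\frac1T\int_0^T S_t\,dt$ satisfy a large deviation principle at speed $T$ with good rate function $I_Y(y)=\inf\{\Lambda_\rho[g,h]:g(0)=\log S_0,\ h(0)=\log V_0,\ \Psi(g)=y\}$. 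When $\kappa<1$ the constant path gives $\Psi(g_0)=S_0(\kappa-1)<0$, so $\{Y_T\ge 0\}$ is a rare event and the LDP gives $\lim_{T\to0}T\log\mathbb{Q}(Y_T\ge 0)=-\inf_{y\ge0}I_Y(y)=-\inf\{\Lambda_\rho[g,h]:\Psi(g)\ge 0\}$. A path-rescaling/monotonicity argument — the analogue of the step that reduces the inequality constraint to the equality constraint in the fixed-strike case — shows this infimum is attained on the boundary $\{\Psi(g)=0\}$, i.e. it equals $\mathcal{F}_\rho(S_0,V_0,\kappa)$ as defined in \eqref{I:rho:floating}.

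It then remains to upgrade the probability estimate to the price estimate for $C_f(T)=e^{-rT}\mathbb{E}[(Y_T)^+]$ (note $T\log e^{-rT}\to 0$, so the discount factor is irrelevant). For the lower bound, $(Y_T)^+\ge\delta\,\mathbf{1}_{\{Y_T\ge\delta\}}$ gives $\liminf_{T\to0}T\log\mathbb{E}[(Y_T)^+]\ge-\inf_{y\ge\delta}I_Y(y)$, and letting $\delta\downarrow 0$ and using lower semicontinuity of $I_Y$ near $0$ yields $\ge-\mathcal{F}_\rho(S_0,V_0,\kappa)$. For the upper bound, split $\mathbb{E}[(Y_T)^+]=\mathbb{E}[(Y_T)^+\mathbf{1}_{\{0<Y_T\le A\}}]+\mathbb{E}[(Y_T)^+\mathbf{1}_{\{Y_T>A\}}]$; the first term is at most $A\,\mathbb{Q}(Y_T\ge 0)$ and is controlled by the LDP upper bound, while for the second, $(Y_T)^+\le\kappa S_T$ and Hölder's inequality with Assumption~\ref{assump:S:T:p} give $\mathbb{E}[(Y_T)^+\mathbf{1}_{\{Y_T>A\}}]\le\kappa\,\mathbb{E}[S_T^p]^{1/p}\,\mathbb{Q}(Y_T\ge 0)^{1-1/p}\le\kappa (C'_p)^{1/p}\,\mathbb{Q}(Y_T\ge 0)^{1-1/p}$; sending first $T\to0$ and then $A\to\infty$ gives $\limsup_{T\to0}T\log C_f(T)\le-\mathcal{F}_\rho(S_0,V_0,\kappa)$. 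The case $\kappa>1$ for the put $P_f(T)=e^{-rT}\mathbb{E}[(\frac1T\int_0^T S_t\,dt-\kappa S_T)^+]$ is entirely symmetric: the relevant rare event is $\{\Psi(g)\le 0\}$, the bound $(\cdot)^+\le\frac1T\int_0^T S_t\,dt$ together with Jensen's inequality and Assumption~\ref{assump:S:T:p} controls the tail, and the rate again reduces to the equality-constrained problem \eqref{I:rho:floating}.

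I expect the main obstacle to be the passage from the probability LDP to the expectation asymptotics, because the floating-strike payoff is unbounded and is not of the exponential form $e^{\phi(Y_T)/T}$ required by Varadhan's lemma; the truncation argument above is precisely where Assumption~\ref{assump:S:T:p} enters, and one must take the limits $T\to0$ and $A\to\infty$ in the correct order. A secondary point needing care is the reduction of the variational problem from the open OTM region $\{\Psi(g)\gtrless 0\}$ to the boundary equality constraint, which rests on a monotonicity/rescaling property of $\Lambda_\rho$ exactly as in the fixed-strike case.
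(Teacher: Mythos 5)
Your proposal is correct and follows essentially the same approach the paper takes: the paper's own proof of Theorem~\ref{thm:OTM:floating} simply states that it ``follows along similar lines as the proof of Theorem~\ref{thm:OTM} by applying large deviations and a contraction principle,'' which is precisely what you carry out, using the same sample-path LDP, the continuous functional $\Psi(g)=\kappa e^{g(1)}-\int_0^1 e^{g(u)}\,du$ in place of $\int_0^1 e^{g(u)}\,du$, and the same H\"older/truncation argument relying on Assumption~\ref{assump:S:T:p} to pass from the probability estimate to the price. The only minor slip is that the $\delta\downarrow 0$ step in your lower bound needs continuity of the rate function near the ATM value rather than just lower semicontinuity, but this is the same standard and omitted step as in the paper's fixed-strike lower bound.
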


%%%%%%%%%%%%%%%%%%%%%%%%%%%%%%%%%%%%%%%%%%
Next we consider the case of ATM floating-strike Asian options.

\begin{theorem}\label{thm:ATM:floating}
Suppose Assumption~\ref{assump:bounded} holds.
Further assume that $\eta(\cdot)$ and $\sigma(\cdot)$ are Lipschitz and
there exists some $C'\in(0,\infty)$ 
such that $\max_{0\leq t\leq T}\mathbb{E}[(S_{t})^{4}]\leq C'$
for any sufficiently small $T>0$.
When $\kappa=1$,
\begin{equation}
\lim_{T\rightarrow 0}\frac{C_{f}(T)}{\sqrt{T}}
=\lim_{T\rightarrow 0}\frac{P_{f}(T)}{\sqrt{T}}
=\frac{S_{0}\eta(S_{0})\sqrt{V_{0}}}{\sqrt{6\pi}}.
\end{equation}
\end{theorem}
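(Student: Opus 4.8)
The plan is to reduce the ATM floating-strike case to the same Gaussian-fluctuation analysis that underlies the ATM fixed-strike result (Theorem~\ref{thm:ATM}). The key observation is that when $\kappa=1$ the payoff is $\left(S_T - \frac{1}{T}\int_0^T S_t\,dt\right)^+$ (for the call), and in the short-maturity limit both $S_T$ and $\frac1T\int_0^T S_t\,dt$ collapse to $S_0$, so the relevant quantity is the rescaled difference $\frac{1}{\sqrt T}\left(S_T - \frac1T\int_0^T S_t\,dt\right)$. First I would write $S_t = S_0\exp\!\big((r-q)t + \int_0^t \eta(S_s)\sqrt{V_s}\,dB_s - \frac12\int_0^t \eta^2(S_s)V_s\,ds\big)$ and expand to leading order in $t$: on the time scale $t = O(T)$ with $T\to 0$, Assumption~\ref{assump:bounded} and the Lipschitz hypotheses on $\eta,\sigma$ give $S_t = S_0\big(1 + \eta(S_0)\sqrt{V_0}\,B_t + o(\sqrt T)\big)$ uniformly on $[0,T]$, and similarly $V_t = V_0 + o(1)$. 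Hence
\begin{equation}
\frac{1}{\sqrt T}\left(S_T - \frac1T\int_0^T S_t\,dt\right)
= \frac{S_0\eta(S_0)\sqrt{V_0}}{\sqrt T}\left(B_T - \frac1T\int_0^T B_t\,dt\right) + o(1)\,.
\end{equation}

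Next I would identify the limiting law. By Brownian scaling $\frac{1}{\sqrt T}\big(B_T - \frac1T\int_0^T B_t\,dt\big) \stackrel{d}{=} W_1 - \int_0^1 W_t\,dt$ for a standard Brownian motion $W$, and this is a centered Gaussian random variable; a direct second-moment computation gives $\mathrm{Var}\big(W_1 - \int_0^1 W_t\,dt\big) = 1 - 2\cdot\frac13 + \frac13 = \frac13$. Therefore the rescaled payoff converges in distribution to $\big(S_0\eta(S_0)\sqrt{V_0}\,\mathcal N\big)^+$ with $\mathcal N \sim N(0,\tfrac13)$, and since $\mathbb{E}[\mathcal N^+] = \sqrt{\frac{1}{3}}\cdot\frac{1}{\sqrt{2\pi}} = \frac{1}{\sqrt{6\pi}}$, the candidate limit $\frac{S_0\eta(S_0)\sqrt{V_0}}{\sqrt{6\pi}}$ emerges. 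The discount factor $e^{-rT}\to 1$, and the call/put limits coincide because the limiting Gaussian is symmetric (equivalently, via the floating-strike put-call parity $C_f(T) - P_f(T) = e^{-rT}(\kappa S_0 e^{(r-q)T} - F(T))$, whose right-hand side is $O(T)$ when $\kappa = 1$, hence $o(\sqrt T)$).

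The main obstacle is upgrading the distributional convergence to convergence of the expectations $\mathbb{E}[(\cdot)^+]/\sqrt T$, i.e.\ justifying uniform integrability of the rescaled payoff $\frac{1}{\sqrt T}\big(S_T - \frac1T\int_0^T S_t\,dt\big)^+$. This is exactly where the hypothesis $\max_{0\le t\le T}\mathbb{E}[S_t^4]\le C'$ enters: by Jensen and this bound, $\mathbb{E}\big[\big|\frac1T\int_0^T S_t\,dt\big|^4\big] \le \frac1T\int_0^T \mathbb{E}[S_t^4]\,dt \le C'$, and combined with $\mathbb{E}[S_T^4]\le C'$ one controls $\mathbb{E}\big[\big(S_T - \frac1T\int_0^T S_t\,dt\big)^2\big]$; the needed sharpening is that this second moment is $O(T)$, which follows by inserting the expansion above together with an $L^2$-estimate on the error term $S_t - S_0 - S_0\eta(S_0)\sqrt{V_0}B_t$ (of order $o(\sqrt T)$ in $L^2$ uniformly on $[0,T]$, using Burkholder-Davis-Gundy and the boundedness/Lipschitz assumptions). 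Given the $O(T)$ bound on the second moment, the family $\{\frac{1}{T}\big(S_T-\frac1T\int_0^T S_t\,dt\big)^2\}_{T}$ is bounded in $L^{2}$ and hence uniformly integrable, which promotes the distributional limit to the claimed limit of $C_f(T)/\sqrt T$. The rest is routine and parallels the proof of Theorem~\ref{thm:ATM}; I would organize it as (i) the pathwise expansion lemma, (ii) the $L^2$ error bound and uniform integrability, (iii) identification of the Gaussian limit and its first absolute moment, and (iv) the put-call symmetry argument.
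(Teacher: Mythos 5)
Your proposal follows essentially the same route as the paper: replace $S_t$ by the Gaussian proxy $\hat S_t := S_0 + S_0\eta(S_0)\sqrt{V_0}\,B_t$, control the $L^2$ error $\mathbb{E}\lvert S_t-\hat S_t\rvert^2 = o(T)$, and evaluate the resulting Gaussian payoff whose variance is $T/3$. The only stylistic difference is that you pass through convergence in distribution plus uniform integrability, whereas the paper concludes more directly by the $1$-Lipschitz property of $x\mapsto x^+$, Cauchy--Schwarz, and the cited bound $\mathbb{E}\lvert S_t-\hat S_t\rvert^2\le CT^{3/2}$ to get $\big\lvert C_f(T)-\mathbb{E}\big[(\hat S_T-\tfrac1T\int_0^T\hat S_t\,dt)^+\big]\big\rvert = O(T^{3/4})$; both routes rest on the same $L^2$ estimate. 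One small slip to fix: $\mathrm{Cov}\big(W_1,\int_0^1 W_t\,dt\big)=\int_0^1 t\,dt=\tfrac12$, not $\tfrac13$, so the variance line should read $1-2\cdot\tfrac12+\tfrac13=\tfrac13$ (your final value is nonetheless correct).
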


We observe that the leading-order asymptotics in Theorem~\ref{thm:ATM:floating}
coincides with that in the fixed-strike case in Theorem~\ref{thm:ATM}.

Finally, similar to the results in Sections \ref{sec:3.2} and \ref{sec:3.3} for the fixed-strike 
Asian options \eqref{asian:price:defn},
one can obtain the expansion of the rate function \eqref{I:rho:floating} in powers of
$x:=\log(\kappa)$, and study the special cases of $\rho=\pm 1$
and the limiting case of the local volatility model as in Section~\ref{sec:main}. 
For the sake of simplicity, we omit the details.

%%%%%%%%%%%%%%%%%%%%%%
\section{Technical Proofs}\label{sec:proofs}

\begin{proof}[Proof of Theorem~\ref{thm:OTM}]
One can first show that when $K>S_{0}$
\begin{equation}\label{limT}
\lim_{T\rightarrow 0}T\log C(T)=\lim_{T\rightarrow 0}T\log\mathbb{Q}\left(\frac{1}{T}\int_{0}^{T}S_{s}ds\geq K\right).
\end{equation}
The proof for the lower bound for \eqref{limT} is quite standard and is omitted here.
Let us prove the upper bound for \eqref{limT}. 
Our following argument is inspired by the derivations for the European-style options 
that can be found in \cite{Friz2018}.
For any $U>K>S_{0}$, by applying H\"{o}lder's inequality, we have
\begin{align}
&\mathbb{E}\left[\left(\frac{1}{T}\int_{0}^{T}S_{t}dt - K\right)^+\right]  
\nonumber
\\
&= \mathbb{E}\left[\left(\frac{1}{T}\int_{0}^{T}S_{t}dt - K\right) 1_{\frac{1}{T}\int_{0}^{T}S_{t}dt\in (K, U)}\right] 
+ \mathbb{E}\left[\left(\frac{1}{T}\int_{0}^{T}S_{t}dt - K\right) 1_{\frac{1}{T}\int_{0}^{T}S_{t}dt \geq U}\right] 
\nonumber
\\
&\leq
(U - K) \mathbb{Q}\left(\frac{1}{T}\int_{0}^{T}S_{t}dt \in (K, U)\right) + \left(\mathbb{E}\left[\left(\frac{1}{T}\int_{0}^{T}S_{t}dt\right)^p \right]\right)^{1/p} \left(\mathbb{E}\left[1_{\frac{1}{T}\int_{0}^{T}S_{t}dt > U}\right]\right)^{1/q}
\nonumber
\\
&\leq
(U - K) \mathbb{Q}\left(\frac{1}{T}\int_{0}^{T}S_{t}dt\geq K\right) 
\nonumber
\\
&\qquad\qquad\qquad+ \left(\mathbb{E}\left[\left(\frac{1}{T}\int_{0}^{T}S_{t}dt\right)^p \right]\right)^{1/p} \left(\mathbb{Q}\left(\frac{1}{T}\int_{0}^{T}S_{t}dt \geq U\right)\right)^{1/q},
\label{take:log}
\end{align}
for any $p,q>1$ such that $\frac{1}{p}+\frac{1}{q}=1$,
where $p$ is chosen 
such that by Jensen's inequality, 
\begin{equation}
\mathbb{E}\left[\left(\frac{1}{T}\int_{0}^{T}S_{t}dt \right)^p\right]
\leq
\frac{1}{T}\int_{0}^{T}\mathbb{E}\left[\left(S_{t}\right)^{p}\right]dt
=O(1),
\end{equation}
as $T\rightarrow 0$ under Assumption~\ref{assump:S:T:p}.

By taking the logarithm in \eqref{take:log} and multiplying with $T$ and letting $T\rightarrow 0$, we obtain
\begin{align}
&\limsup_{T\rightarrow 0}T\log 
\mathbb{E}\left[\left(\frac{1}{T}\int_{0}^{T}S_{t}dt - K\right)^+\right]  
\nonumber
\\
&\leq
\max\left\{\limsup_{T\rightarrow 0}T\log\mathbb{Q}\left(\frac{1}{T}\int_{0}^{T}S_{t}dt\geq K\right),\frac{1}{q}\limsup_{T\rightarrow 0}T\log\mathbb{Q}\left(\frac{1}{T}\int_{0}^{T}S_{t}dt\geq U\right)\right\}.\label{eqn:K:U}
\end{align}

Next, let us show that the limits
\begin{equation*}
\lim_{T\rightarrow 0}T\log\mathbb{Q}\left(\frac{1}{T}\int_{0}^{T}S_{t}dt\geq K\right)\qquad\text{and}\qquad
\lim_{T\rightarrow 0}T\log\mathbb{Q}\left(\frac{1}{T}\int_{0}^{T}S_{t}dt\geq U\right)
\end{equation*}
exist.

Note that $(S_{t},V_{t})$ satisfies \eqref{eqn:S}-\eqref{eqn:V} which is equivalent to
\begin{align}
&\frac{dS_{t}}{S_{t}}=(r-q)dt+\eta(S_{t})\sqrt{V_{t}}\rho dZ_{t}+\eta(S_{t})\sqrt{V_{t}}\sqrt{1-\rho^{2}}dW_{t},
\\
&\frac{dV_{t}}{V_{t}}=\mu(V_{t})dt+\sigma(V_{t})dZ_{t},
\end{align}
where $W_{t},Z_{t}$ are two independent standard Brownian motions. 

Under Assumptions~\ref{assump:bounded} and \ref{assump:LDP}, by the sample-path large deviations for 
small time diffusions (see for example \cite{Varadhan} and \cite{Robertson2010}),
one can see that $\mathbb{Q}(\{(\log S_{tT},\log V_{tT}),0\leq t\leq 1\}\in\cdot)$
satisfies a sample-path large deviation principle on $L_{\infty}[0,1]$ with the rate function:
\begin{equation}\label{rate:function:LDP}
\frac{1}{2(1-\rho^{2})}\int_{0}^{1}\left(\frac{g'(t)}{\eta(e^{g(t)})\sqrt{e^{h(t)}}}-\frac{\rho h'(t)}{\sigma(e^{h(t)})}\right)^{2}dt
+\frac{1}{2}\int_{0}^{1}\left(\frac{h'(t)}{\sigma(e^{h(t)})}\right)^{2}dt,    
\end{equation}
with $g(0)=\log S_{0}$, $h(0)=\log V_{0}$ and $g,h\in\mathcal{AC}[0,1]$, where $\mathcal{AC}[0,1]$ is the space of absolutely continuous functions on $[0,1]$, and the rate function is $+\infty$ otherwise.

By an application of the contraction principle (see for example Theorem 4.2.1. in \cite{Dembo1998}, restated in Theorem~\ref{Contraction:Thm}), one can compute that
\begin{align}
&\lim_{T\rightarrow 0}T\log\mathbb{Q}\left(\frac{1}{T}\int_{0}^{T}S_{s}ds\geq K\right)
\nonumber
\\
&=-\inf_{\substack{g(0)=\log S_{0},h(0)=\log V_{0}\\
\int_{0}^{1}e^{g(t)}dt=K,g,h\in\mathcal{AC}[0,1]}}
\Bigg\{\frac{1}{2(1-\rho^{2})}\int_{0}^{1}\left(\frac{g'(t)}{\eta(e^{g(t)})\sqrt{e^{h(t)}}}-\frac{\rho h'(t)}{\sigma(e^{h(t)})}\right)^{2}dt
\nonumber
\\
&\qquad\qquad\qquad\qquad\qquad\qquad\qquad\qquad\qquad
+\frac{1}{2}\int_{0}^{1}\left(\frac{h'(t)}{\sigma(e^{h(t)})}\right)^{2}dt\Bigg\}.
\end{align}

The proof for the put options is similar and hence omitted here.
\end{proof}

%%%%%%%%%%%%%%%%%%%%%%%%%%%%%%%%%%%%

\begin{proof}[Proof of Proposition~\ref{prop:first:order}]

We start with an expansion for the functions $g,h$ in powers of log-moneyness $x=\log(K/S_0)$ of the form
\begin{align}
&g(t)=g_{0}(t)+xg_{1}(t)+x^{2}g_{2}(t)+x^{3}g_{3}(t)+O(x^{4}),\nonumber
\\
&h(t)=h_{0}(t)+xh_{1}(t)+x^{2}h_{2}(t)+x^{3}h_{3}(t)+O(x^{4}),\label{g:h:expansion}
\end{align}
We expand also the Lagrange multiplier as $\lambda=\lambda_{0}+x\lambda_{1}+x^{2}\lambda_{2}+x^{3}\lambda_{3}+O(x^{4})$ as $x\rightarrow 0$.

The zero-th order terms in these expansions are
%where we notice that
$g_{0}(t)\equiv\log S_{0}$, $h_{0}(t)\equiv\log V_{0}$
such that $g'_{0}(t)\equiv 0$ and $h'_{0}(t)\equiv 0$.
The transversality conditions $g'(1) = h'(1) = 0$ are satisfied if and only if we have 
$g'_k(1) = h'_k(1) = 0$ for all $k\geq 1$. Also, the boundary conditions $g(0) = \log S_0, h(0)=\log V_0$ imply that one must have $g_k(0) = h_k(0)=0$ for all $k\geq 1$.

%Moreover, one can check that $\lambda_{0}=0$
%by plugging $g_{0}(t)\equiv\log S_{0}$, $h_{0}(t)\equiv\log V_{0}$
%into \eqref{EL:1}-\eqref{EL:2}.

The constraint $\int_0^1 e^{g(t)}dt = K$ relates $g_k(t)$ to all $g_j(t)$ of lower order $0 \leq j < k$.
This is written equivalently as
\begin{equation}
S_0 \int_0^1 e^{x g_1(t) + x^2 g_2(t) + O(x^{3}) }dt = S_0 e^x, 
\end{equation}
as $x\rightarrow 0$.
Expanding in $x$ and selecting terms of the same power of $x$ on both sides gives the constraints
\begin{align}\label{norm}
& \int_0^1 g_1(t) dt = 1 \,,\\
& \int_0^1 \left(g_2(t) + \frac12 (g_1(t))^2 \right)dt = \frac12 \,,\nonumber \\
& \int_0^1 \left(g_3(t) + g_1(t) g_2(t) + \frac16 (g_1(t))^3 \right)dt = \frac16\,,\nonumber\\
&\cdots\cdots\cdots\cdots \nonumber
\end{align}

%By plugging 
We substitute the expansions \eqref{g:h:expansion} into the Euler-Lagrange equations 
\eqref{EL:1}-\eqref{EL:2} and expand in $x$. Let us consider the terms of given order in $x$ resulting from this expansion.

\textbf{Order $O(x^0)$.}
At order $O(x^0)$, the equation~\eqref{EL:1} gives
$\lambda_0 S_0 = 0$ which gives $\lambda_0=0$.
Both sides of the equation~\eqref{EL:2}  vanish identically at this order.

\textbf{Order $O(x)$.}
%$g_{0}(t)\equiv\log S_{0}$, $h_{0}(t)\equiv\log V_{0}$, $g'_{0}(t)\equiv 0$ and $h'_{0}(t)\equiv 0$, 
At order $O(x)$, the two equations become
%and taking the coefficients in the $x$ term
%in the expansion to be zero, we obtain that
\begin{align}
\frac{d}{dt}\left(\frac{1}{1-\rho^{2}}\frac{g'_{1}(t)}{\eta^{2}_0 V_{0}}
-\frac{\rho}{1-\rho^{2}}\frac{h'_{1}(t)}{\eta_0 \sqrt{V_{0}}\sigma_0}\right)
=\lambda_{1}S_{0},\label{EL:1:first}
\end{align}
and
\begin{align}
\frac{d}{dt}\left(\frac{1}{1-\rho^{2}}\frac{h'_{1}(t)}{\sigma^{2}_0}
-\frac{\rho}{1-\rho^{2}}\frac{g'_{1}(t)}{\eta_0 \sqrt{V_{0}}\sigma_0 }\right)=0,\label{EL:2:first}
\end{align}
with the constraints that $g_{1}(0)=0$, $h_{1}(0)=0$ and $\int_{0}^{1}g_{1}(t)dt=1$,
and the transversality condition gives $g_{1}'(1)=h_{1}'(1)=0$, where $\eta_{0},\sigma_{0}$ are defined in \eqref{eta:expansion} and \eqref{sigma:expansion}.
We can re-write \eqref{EL:1:first}-\eqref{EL:2:first} as
\begin{align}
&\frac{g''_{1}(t)}{\eta^{2}_0V_{0}}-\frac{\rho h''_{1}(t)}{\eta_0\sqrt{V_{0}}\sigma_0}=\lambda_{1}S_{0}(1-\rho^{2}),
\\
&\frac{h''_{1}(t)}{\sigma^{2}_0}=\frac{\rho g''_{1}(t)}{\eta_0\sqrt{V_{0}}\sigma_0},
\end{align}
which implies that
\begin{align}
&g''_{1}(t)=\lambda_{1}S_{0}\eta^{2}_0 V_{0},\qquad\qquad \,\, g_{1}(0)=g'_{1}(1)=0,
\\
&h''_{1}(t)=\lambda_{1}S_{0}\rho\sigma_0\eta_0\sqrt{V_{0}},\qquad h_{1}(0)=h'_{1}(1)=0,
\end{align}
with $\int_{0}^{1}g_{1}(t)dt=1$. 
These equations can be integrated using the boundary condition $g'_1(1)=h'(1)=0$ to give
\begin{align}
&g'_{1}(t)= \lambda_{1}S_{0}\eta^{2}(S_{0})V_{0}(t-1),
\\
&h'_{1}(t)= \lambda_{1}S_{0}\rho\sigma(V_{0})\eta(S_{0})\sqrt{V_{0}}(t-1) \,.
\end{align}
Integrating again using the boundary conditions $g_1(0)=h_1(0)=0$ gives
%One can solve for $g_{1},h_{1}$ to obtain:
\begin{align}
&g_{1}(t)=\frac{1}{2}\lambda_{1}S_{0}\eta^{2}_0 V_{0}(t^{2}-2t),
\\
&h_{1}(t)=\frac{1}{2}\lambda_{1}S_{0}\rho\sigma_0 \eta_0\sqrt{V_{0}}(t^{2}-2t),
\end{align}

The constant $\lambda_1$ is determined from the first normalization condition (\ref{norm})
\begin{equation}
\int_{0}^{1}g_{1}(t)dt=-\frac{1}{2}\lambda_{1}S_{0}\eta^{2}_0 V_{0}\frac{2}{3}=1,
\end{equation}
which gives $\lambda_{1}=-\frac{3}{S_{0}\eta^{2}_0V_{0}}$. We conclude that
\begin{align}
&g_{1}(t)=\frac{3}{2}(2t-t^{2}),
\\
&h_{1}(t)=\frac{3\rho\sigma_0}{2\eta_0 \sqrt{V_{0}}}(2t-t^{2}).
\end{align}

Finally, by plugging \eqref{g:h:expansion} into \eqref{I:rho:2}, it follows from \eqref{I:rho:1}
that
\begin{align}
&\mathcal{I}_{\rho}(S_{0},V_{0},S_{0}e^{x})
\nonumber
\\
&=\frac{x^{2}}{2(1-\rho^{2})}\int_{0}^{1}\left(\frac{g'_{1}(t)}{\eta(S_{0})\sqrt{V_{0}}}-\frac{\rho h'_{1}(t)}{\sigma_0}\right)^{2}dt
+\frac{x^{2}}{2}\int_{0}^{1}\left(\frac{h'_{1}(t)}{\sigma_0}\right)^{2}dt
+O(x^{3})
\nonumber
\\
&=\frac{3}{2\eta^{2}_0V_{0}}x^{2}+O(x^{3}),
\end{align}
as $x\rightarrow 0$.
This completes the proof for the first term in \eqref{JAexp}.

\textbf{Order $O(x^2)$.} The proof is completely analogous to that of the first term, so we give only the main steps. 
At $O(x^2)$ the two Euler-Lagrange equations are again two linear equations in $g''_2(t), h''_2(t)$, which are easily solved. The results are quadratic polynomials in $t$, which we write as
\begin{equation}\label{g2pp}
g''_2(t) = a + b t + c t^2\,,\quad h''_2(t) = \bar a + \bar b t + \bar c t^2\,.
\end{equation}

The coefficients of these polynomials are 
\begin{align}
& a = \frac{1}{2\eta_0} \left(18 \eta_1 + 9 \frac{\rho \sigma_0}{\sqrt{V_0}} \right) + \eta_0^2 \lambda_2 S_0 V_0, \\
%\frac{39}{10} + \frac{9}{10\eta_0} \left(16\eta_1 + 8 \frac{\rho \sigma_0}{\sqrt{V_0}} \right), \\ 
& b = - 9 - \frac{9}{\eta_0} \left(4 \eta_1 + 2 \frac{\rho \sigma_0}{\sqrt{V_0}} \right), \\ 
& c = \frac{9}{2} + \frac{1}{\eta_0} \left(18\eta_1 + 9 \frac{\rho \sigma_0}{\sqrt{V_0}} \right), 
\end{align}
and
\begin{align}
& \bar a = \frac{\sigma_0}{2\eta_0^2 V_0}\left( - 9\left(1-\rho^2\right) \sigma_0 + 
2\rho\left(9\rho \sigma_1 + \eta_0^3 \lambda_2 S_0 V_0^{3/2}\right) \right), \\
%\frac{3\sigma}{10\eta_0^2 V_0} \left(3 (-5+8\rho^2) \sigma_0 + 
%\rho \left( 30 \rho \sigma_1 + (13 \eta_0 + 18\eta_1) \sqrt{V_0} \right)\right),\\
& \bar b =  -\frac{9\sigma}{2\eta_0^2 V_0} \left( \left(-2+3\rho^2\right) \sigma_0 + 
2\rho \left( 3 \rho \sigma_1 + ( \eta_0 + \eta_1) \sqrt{V_0} \right)\right), \\ 
& \bar c = \frac{9\sigma}{4\eta_0^2 V_0} \left( (-2+3\rho^2) \sigma_0 + 
2\rho \left( 3 \rho \sigma_1 + ( \eta_0 + \eta_1) \sqrt{V_0} \right)\right).
\end{align}

The equations in (\ref{g2pp}) can be integrated using the boundary conditions at $t=0$ and $t=1$
as before to give
\begin{equation}
g'_2(t) = (t-1) \left(a + \frac12 b (t+1) + \frac13 c (t^2+t+1) \right), 
\end{equation}
and
\begin{equation}\label{g2sol}
g_2(t) = \frac12 a t (t-2) + \frac16 b t (t^2-3) + \frac{1}{12} c t (t^3-4),
\end{equation}
and analogous for $h_2(t)$ with the replacements $a \to \bar a, b\to \bar b, c\to \bar c$.

The normalization condition for $g_2(t)$ reads $\int_0^1 g_2(t) = \frac12 - \frac12 \int_0^1 (g_1(t))^2 dt = -\frac{1}{10}$ where we substituted the result $g_1(t) = \frac32 (2t-t^2)$ found above. 
Substituting here the expression (\ref{g2sol}), this gives a constraint for the constants $a,b,c$
\begin{equation}
\frac13 a + \frac{5}{24} b + \frac{3}{20} c = \frac{1}{10}\,.
\end{equation}
From this we determine the constant 
\begin{equation}
\lambda_2 = \frac{3}{10 \eta_0^3 S_0 V_0^{3/2}} \left(9 \rho \sigma_0 + 13 \eta_0 \sqrt{V_0} + 18 \eta_1 \sqrt{V_0}\right).
\end{equation}

Substituting into $a$ and $\bar a$ we get the final expressions for these coefficients
\begin{align}
& a = \frac{39}{10} + \frac{9}{10\eta_0} \left(16\eta_1 + 8 \frac{\rho \sigma_0}{\sqrt{V_0}} \right), \\ 
& \bar a = \frac{3\sigma}{10\eta_0^2 V_0} \left(3 (-5+8\rho^2) \sigma_0 + 
\rho \left( 30 \rho \sigma_1 + (13 \eta_0 + 18\eta_1) \sqrt{V_0} \right)\right) \,.
\end{align}
Finally, substituting the expressions for $g'_2(t), h'_2(t)$ into the expansion of the rate function to order $O(x^3)$ gives the stated result. 

\textbf{Order $O(x^3)$.} At this order we have the unknowns $g_3(t), h_3(t), \lambda_3$. 
The Euler-Lagrange equations determine the second derivatives $g''_3(t), h''_3(t)$ in terms of lower order functions. The solutions for these second derivatives are quintic polynomials in $t$, 
which we denote as
\begin{equation}
g''_3(t) = c_0 + c_1 t + c_2 t^2 + \cdots + c_5 t^5\,,
\end{equation}
and analogous for $h''_3(t)$ in terms of $\bar c_k$. The Euler-Lagrange equations determine the 12 coefficients
$c_k, \bar c_k$ with $k=0,1,2,\cdots , 5$. From $g''_3(t)$ we find $g'_3(t)$ and $g_3(t)$ by integration, using the boundary conditions $g'_3(1)=0, g_3(0)=0$. The results are
\begin{equation}
g'_3(t) = (t-1)\left(c_0 + \frac12 c_1(t+1) + \frac13 c_2 \left(t^2+t+1\right) + \cdots + \frac16 c_5 \left(t^4+t^3+t^2+t+1\right)\right)\,,
\end{equation}
and
\begin{equation}\label{g3sol}
g_3(t) = \frac12 c_0 t (t-2) + \frac{1}{2\cdot 3} c_1 t (t^2-3) + \frac{1}{3\cdot 4} c_2 t(t^3-4) + \cdots + 
\frac{1}{6\cdot 7} c_5 t (t^6-7)\,.
\end{equation}
Similar expressions hold for $h'_3(t)$ and $h_3(t)$, with the replacements $c_k \to \bar c_k$, respectively.

The coefficients $c_0, \bar c_0$ depend, in addition to $\sigma_k,\eta_k, V_0,\rho$, also on 
$\lambda_3$. 
On the other hand, $c_k, \bar c_k$ with $1\leq k \leq 5$ do not depend on $\lambda_3$.
In order to fix $\lambda_3$, we need to use the normalization equation (\ref{norm}). The constraint for $g_3(t)$ is obtained from the third line in (\ref{norm}) and reads explicitly
\begin{equation}\label{g3const}
\int_0^1 g_3(t) dt = \frac16 - \int_0^1 g_1(t) g_2(t) dt - \frac16 \int_0^1 (g_1(t))^3 dt =
\frac{1}{1050\eta_0} \left( 13 \eta_0 - 72 \eta_1 - 36 \frac{\rho\sigma_0}{\sqrt{V_0}}\right)\,.
\end{equation}
The integral on the left-hand side can be evaluated using (\ref{g3sol}). 
\begin{equation}
\int_0^1 g_3(t) dt = -\frac13 c_0 - \frac{5}{24} c_1 - \frac{3}{20} c_2 
         - \frac{7}{60}  c_3 - \frac{2}{21} c_4 - \frac{9}{112} c_5 \,.
\end{equation}
Combined with (\ref{g3const}), this gives a result for $c_0$ and thus also $\lambda_3$ as follows:
\begin{align}\label{lam3sol}
\lambda_3 &= \frac{1}{44800 \eta_0^4 S_0 V_0^2} 
\Big(-15750 \sigma_0^2 + 15669 \rho^2 \sigma_0^2\\
& \qquad\qquad\qquad\qquad\quad- 17370 \rho^2 \sigma_0 \sigma_1 - 135198 \eta_0 \rho \sigma_0 \sqrt{V_0} - 
   17694 \eta_1 \rho \sigma_0 \sqrt{V_0} 
   \nonumber
   \\
   &\qquad\qquad\qquad\qquad\qquad- 121472 \eta_0^2 V_0 - 
   270396 \eta_0 \eta_1 V_0 - 47484 \eta_1^2 V_0 + 24840 \eta_0 \eta_2 V_0\Big) \,.\nonumber
\end{align}
The coefficients $c_k, \bar c_k$ are given in Appendix \ref{sec:appC} 
as their expressions are rather lengthy. 

The $O(x^4)$ term in the rate function is found by integrating over $t$ the coefficient of this order in the expansion of the integrand in $\Lambda_\rho[g,h]$, see (\ref{I:rho:2}).
This completes the proof.
\end{proof}

%%%%%%%%%%%%%%%%%%%%%%%%%%%%%%%%%%%

\begin{proof}[Proof of Proposition~\ref{prop:pm:1}]
As $\rho\rightarrow\pm 1$, we must have
$\frac{g'(t)}{\eta(e^{g(t)})\sqrt{e^{h(t)}}}-\frac{\rho h'(t)}{\sigma(e^{h(t)})}\rightarrow 0$, 
$0\leq t\leq 1$; otherwise $\Lambda_{\rho}[g,h]$ would approach to $+\infty$.
Therefore, given $h$, when $\rho=\pm 1$, the optimal $g$ satisfies
\begin{equation}
\frac{g'(t)}{\eta(e^{g(t)})\sqrt{e^{h(t)}}}=\frac{\pm h'(t)}{\sigma(e^{h(t)})},\qquad 0\leq t\leq 1,
\end{equation}
which implies that
\begin{equation}
\int_{0}^{t}\frac{g'(s)}{\eta(e^{g(s)})}ds=\int_{0}^{t}\frac{\pm \sqrt{e^{h(s)}}h'(s)}{\sigma(e^{h(s)})}ds,\qquad 0\leq t\leq 1,
\end{equation}
which is equivalent to
\begin{equation}
\int_{\log S_{0}}^{g(t)}\frac{dx}{\eta(e^{x})}=\int_{\log V_{0}}^{h(t)}\frac{\pm \sqrt{e^{x}}dx}{\sigma(e^{x})},\qquad 0\leq t\leq 1,
\end{equation}
where we used the constraints $g(0)=\log S_{0}$ and $h(0)=\log V_{0}$.
We can further compute that this is equivalent to
\begin{equation}
\int_{\log S_{0}}^{g(t)}\frac{e^{x}dx}{e^{x}\eta(e^{x})}=\int_{\log V_{0}}^{h(t)}\frac{\pm e^{x}dx}{\sqrt{e^{x}}\sigma(e^{x})},\qquad 0\leq t\leq 1,
\end{equation}
which is equivalent to
\begin{equation}
\int_{S_{0}}^{e^{g(t)}}\frac{dx}{x\eta(x)}=\int_{V_{0}}^{e^{h(t)}}\frac{\pm dx}{\sqrt{x}\sigma(x)},\qquad 0\leq t\leq 1.
\end{equation}
Therefore, given $h$, the optimal $g$ is given by
\begin{equation}
e^{g(t)}=\mathcal{F}_{\pm}(e^{h(t)}),\qquad 0\leq t\leq 1,
\end{equation}
where $\mathcal{F}_{\pm}(\cdot)$ is defined as:
\begin{equation}
\int_{S_{0}}^{\mathcal{F}_{\pm}(x)}\frac{dy}{y\eta(y)}=\int_{V_{0}}^{x}\frac{\pm dy}{\sqrt{y}\sigma(y)},
\end{equation}
for any $x>0$.
Hence, we conclude that
\begin{equation}
\mathcal{I}_{\pm 1}(S_{0},V_{0},K)
=\inf_{\substack{h(0)=\log V_{0}\\
\int_{0}^{1}\mathcal{F}_{\pm}(e^{h(t)})dt=K, h\in\mathcal{AC}[0,1]}}\frac{1}{2}\int_{0}^{1}\left(\frac{h'(t)}{\sigma(e^{h(t)})}\right)^{2}dt.
\end{equation}
This completes the proof.
\end{proof}

%%%%%%%%%%%%%%%%%%%%%%%%%%%%%%%%%%%

\begin{proof}[Proof of Theorem~\ref{thm:ATM}]
We only provide the proof for the ATM call option. 
The case for the put option is similar.
First of all, it is easy to see that
\begin{equation}
\left|C(T)-\mathbb{E}\left[\left(\frac{1}{T}\int_{0}^{T}S_{t}dt-S_{0}\right)^{+}\right]\right|
=O(T),
\end{equation}
as $T\rightarrow 0$, which follows from the estimate:
\begin{align}
\left|C(T)-\mathbb{E}\left[\left(\frac{1}{T}\int_{0}^{T}S_{t}dt-S_{0}\right)^{+}\right]\right|
&\leq
\left|1-e^{-rT}\right|\cdot\left|\mathbb{E}\left[\left(\frac{1}{T}\int_{0}^{T}S_{t}dt-S_{0}\right)^{+}\right]\right|
\nonumber
\\
&\leq\left|1-e^{-rT}\right|\cdot\left(\frac{1}{T}\int_{0}^{T}\mathbb{E}[S_{t}]dt+S_{0}\right)
\nonumber
\\
&=\left|1-e^{-rT}\right|\cdot\left(\frac{1}{T}\int_{0}^{T}S_{0}e^{(r-q)t}dt+S_{0}\right).
\end{align}

For the ATM case (i.e. $K=S_{0}$), it is proved in \cite{PWZ2024} that
for the local-stochastic volatility model \eqref{eqn:S}-\eqref{eqn:V}, 
under Assumption~\ref{assump:bounded}, and the assumption that $\eta(\cdot),\sigma(\cdot)$ are Lipschitz and there exists some $C'\in(0,\infty)$ 
such that $\max_{0\leq t\leq T}\mathbb{E}[(S_{t})^{4}]\leq C'$
for any sufficiently small $T>0$, 
we have that there exists some $C>0$, 
such that for any sufficiently small $T>0$, 
\begin{equation}
\mathbb{E}\left|S_{t}-\hat{S}_{t}\right|^{2}
\leq CT^{3/2},
\end{equation}
for any $0\leq t\leq T$, where
\begin{equation}
\hat{S}_{t}:=S_{0}+S_{0}\eta(S_{0})\sqrt{V_{0}}B_{t},
\end{equation}
with $B_{t}$ being a standard Brownian motion as in \eqref{eqn:S}.

Since the map $x\mapsto x^{+}$ is $1$-Lipschitz, we have
\begin{align}
\left|\mathbb{E}\left[\left(\frac{1}{T}\int_{0}^{T}S_{t}dt-S_{0}\right)^{+}\right]
-\mathbb{E}\left[\left(\frac{1}{T}\int_{0}^{T}\hat{S}_{t}dt-S_{0}\right)^{+}\right]\right|
&\leq
\frac{1}{T}\int_{0}^{T}\mathbb{E}\left[\left|S_{t}-\hat{S}_{t}\right|\right]dt
\nonumber
\\
&\leq
\frac{1}{T}\int_{0}^{T}\left(\mathbb{E}\left[\left|S_{t}-\hat{S}_{t}\right|^{2}\right]\right)^{1/2}dt
\nonumber
\\
&\leq\sqrt{C}T^{3/4},
\end{align}
for any sufficiently small $T$.

Finally, we can compute that
\begin{align}
\mathbb{E}\left[\left(\frac{1}{T}\int_{0}^{T}(S_{0}+S_{0}\eta(S_{0})\sqrt{V_{0}}B_{t})dt-S_{0}\right)^{+}\right]
&=S_{0}\eta(S_{0})\sqrt{V_{0}}\mathbb{E}\left[\left(\frac{1}{T}\int_{0}^{T}B_{t}dt\right)^{+}\right]
\nonumber
\\
&=S_{0}\eta(S_{0})\sqrt{V_{0}}\frac{\sqrt{T}}{\sqrt{3}}\mathbb{E}[X^{+}]
\nonumber
\\
&=S_{0}\eta(S_{0})\sqrt{V_{0}}\frac{\sqrt{T}}{\sqrt{6\pi}},
\end{align}
where $X\sim\mathcal{N}(0,1)$.
This completes the proof.
\end{proof}

%%%%%%%%%%%%%%%%%%%%%%%%%%%%%%

\begin{proof}[Proof of Theorem~\ref{thm:OTM:floating}]
The proof follows along similar lines as the proof of Theorem~\ref{thm:OTM}
by applying large deviations and a contraction principle 
and we omit the details here.
\end{proof}

%%%%%%%%%%%%%%%%%%%%%%%%%%%%%%%%%%%%%%
\begin{proof}[Proof of Theorem~\ref{thm:ATM:floating}]
We only provide the proof for the ATM call option. 
The case for the put option is similar.
By following the same arguments as in the proof of Theorem~\ref{thm:ATM}, 
we can show that
\begin{equation}
\left|C_{f}(T)-\mathbb{E}\left[\left(\hat{S}_{T}-\frac{1}{T}\int_{0}^{T}\hat{S}_{t}dt\right)^{+}\right]\right|
\leq O(T^{3/4}),
\end{equation}
as $T\rightarrow 0$, where
\begin{equation}
\hat{S}_{t}:=S_{0}+S_{0}\eta(S_{0})\sqrt{V_{0}}B_{t},
\end{equation}
with $B_{t}$ being a standard Brownian motion as in \eqref{eqn:S}.

In addition, we can compute that
\begin{align}
&\mathbb{E}\left[\left(S_{0}+S_{0}\eta(S_{0})\sqrt{V_{0}}B_{T}-\frac{1}{T}\int_{0}^{T}(S_{0}+S_{0}\eta(S_{0})\sqrt{V_{0}}B_{t})dt\right)^{+}\right]
\nonumber
\\
&=S_{0}\eta(S_{0})\sqrt{V_{0}}\mathbb{E}\left[\left(B_{T}-\frac{1}{T}\int_{0}^{T}B_{t}dt\right)^{+}\right].
\end{align}
Note that $B_{T}-\frac{1}{T}\int_{0}^{T}B_{t}dt$ is Gaussian with mean $0$ and variance
\begin{align}
\mathbb{E}\left[\left(B_{T}-\frac{1}{T}\int_{0}^{T}B_{t}dt\right)^{2}\right]
&=\mathbb{E}[B_{T}^{2}]+\frac{1}{T^{2}}\mathbb{E}\left[\left(\int_{0}^{T}B_{t}dt\right)^{2}\right]-\frac{2}{T}\int_{0}^{T}\mathbb{E}[B_{T}B_{t}]dt
\nonumber
\\
&=T+\frac{T}{3}-\frac{2}{T}\frac{T^{2}}{2}=\frac{T}{3}.
\end{align}
Hence, we conclude that
\begin{align}
&\mathbb{E}\left[\left(S_{0}+S_{0}\eta(S_{0})\sqrt{V_{0}}B_{T}-\frac{1}{T}\int_{0}^{T}(S_{0}+S_{0}\eta(S_{0})\sqrt{V_{0}}B_{t})dt\right)^{+}\right]
\nonumber
\\
&=S_{0}\eta(S_{0})\sqrt{V_{0}}\frac{\sqrt{T}}{\sqrt{3}}\mathbb{E}[X^{+}]
=S_{0}\eta(S_{0})\sqrt{V_{0}}\frac{\sqrt{T}}{\sqrt{6\pi}},
\end{align}
where $X\sim\mathcal{N}(0,1)$.
This completes the proof.
\end{proof}

%%%%%%%%%%%%%%%%%%%%%%%%%%%%%%
\section{Details for the $g_3(t), h_3(t)$ Functions}
\label{sec:appC}

We give here the 12 coefficients appearing in the functions $g_3(t)$ and $h_3(t)$ giving the optimal paths to order $O(x^3)$ for the asset price and volatility, respectively.
The function $g_3(t)$ is expressed as in (\ref{g3sol}), in terms of 6 coefficients $c_k$ which are given as follows:
\begin{align}
c_0 &=
 \frac{1}{44800 \eta_0^2 V_0}
  \Big(9 (-1750 + 30861 \rho^2) \sigma_0^2 - 
    18 \rho \sigma_0 (12165 \rho \sigma_1 + (791 \eta_0 - 46057 \eta_1) \sqrt{V_0}) \nonumber \\
    &\qquad\qquad\qquad\qquad+ 
    4 (-30368 \eta_0^2 - 7119 \eta_0 \eta_1 + 149409 \eta_1^2 + 
       6210 \eta_0 \eta_2) V_0 \Big)\,,
\end{align}
%%%%%%%%%%%%%%%%%%%
\begin{align}
c_1 &= \frac{1}{80 \eta_0^2 V_0}
\Big(360 \sigma_0^2 - 2772 \rho^2 \sigma_0^2 + 
  1800 \rho^2 \sigma_0 \sigma_1 - 648 \eta_0 \rho \sigma_0 \sqrt{V_0} \\
  &\qquad\qquad\qquad- 
  7848 \eta_1 \rho \sigma_0 \sqrt{V_0} + 1152 \eta_0^2 V_0 - 1296 \eta_0 \eta_1 V_0 - 
  3888 \eta_1^2 V_0 - 4320 \eta_0 \eta_2 V_0 \Big)\,, \nonumber
\end{align}
%%%%%%%%%%%%%%%%%%%%%%%%%
\begin{align}
c_2 &= 
\frac{1}{80 \eta_0^2 V_0}
\Big(-900 \sigma_0^2 + 5202 \rho^2 \sigma_0^2 - 
  2340 \rho^2 \sigma_0 \sigma_1 + 2556 \eta_0 \rho \sigma_0 \sqrt{V_0} \\
  &\qquad\qquad\quad+ 
  14868 \eta_1 \rho \sigma_0 \sqrt{V_0} - 2016 \eta_0^2 V_0 + 5112 \eta_0 \eta_1 V_0 + 
  4968 \eta_1^2 V_0 + 15120 \eta_0 \eta_2 V_0 \Big)\,, \nonumber 
\end{align}
%%%%%%%%%%%%%%%%%%%%%%%%%%%%%
\begin{align}
c_3 &= 
\frac{1}{80 \eta_0^2 V_0}
\Big(900 \sigma_0^2 - 4518 \rho^2 \sigma_0^2 + 
  900 \rho^2 \sigma_0 \sigma_1 - 3096 \eta_0 \rho \sigma_0 \sqrt{V_0} \\
  &\qquad\qquad\quad- 
  13572 \eta_1 \rho \sigma_0 \sqrt{V_0} + 1440 \eta_0^2 V_0 - 6192 \eta_0 \eta_1 V_0 - 
  4032 \eta_1^2 V_0 - 17280 \eta_0 \eta_2 V_0 \Big)\,, \nonumber
\end{align}
%%%%%%%%%%%%%%%%%%%%%%%%%%%%%%%%
\begin{align}
c_4 &=
\frac{1}{80 \eta_0^2 V_0}
\Big(-450 \sigma_0^2 + 1935 \rho^2 \sigma_0^2 + 
  90 \rho^2 \sigma_0 \sigma_1 + 1530 \eta_0 \rho \sigma_0 \sqrt{V_0} \\
  &\qquad\qquad\qquad+ 
  6030 \eta_1 \rho \sigma_0 \sqrt{V_0} - 360 \eta_0^2 V_0 + 3060 \eta_0 \eta_1 V_0 + 
  2340 \eta_1^2 V_0 + 7560 \eta_0 \eta_2 V_0)\,, \nonumber
\end{align}
%%%%%%%%%%%%%%%%%%%%%%%%%%%%%%%%%
\begin{align}
 c_5 &= 
 \frac{1}{80 \eta_0^2 V_0}
 \Big(90 \sigma_0^2 - 315 \rho^2 \sigma_0^2 - 90 \rho^2 \sigma_0 \sigma_1 - 
  270 \eta_0 \rho \sigma_0 \sqrt{V_0} - 990 \eta_1 \rho \sigma_0 \sqrt{V_0} \\
  &\qquad\qquad\qquad- 
  540 \eta_0 \eta_1 V_0 - 540 \eta_1^2 V_0 - 1080 \eta_0 \eta_2 V_0 \Big)\,. \nonumber
\end{align}

The function $h_3(t)$ has the same form as (\ref{g3sol}) but with different coefficients $\bar c_k$.
They are given below;
\begin{align}
\bar c_0 &=
\frac{1}{44800 \eta_0^3 V_0^{3/2}} 
\sigma_0 \Big(246330 \rho \sigma_0^2 + 9 \rho^3 \sigma_0 (1741 \sigma_0 - 1930 \sigma_1) - 
   201600 \rho \sigma_0 \sigma_1 \\
   &\qquad\qquad\qquad\qquad\quad+ 40320 (3 \eta_0 + 8 \eta_1) \sigma_0 \sqrt{V_0} - 
   18 (7511 \eta_0 - 28137 \eta_1) \rho^2 \sigma_0 \sqrt{V_0} \nonumber \\
   & \qquad\qquad\qquad\qquad\qquad+ 
   4 (-30368 \eta_0^2 - 7119 \eta_0 \eta_1 + 149409 \eta_1^2 + 
      6210 \eta_0 \eta_2) \rho V_0 \Big)\,,\nonumber
\end{align}
%%%%%%%%%%%%%%%%%%%%%%%%
\begin{align}
\bar c_1 &=
\frac{1}{80 \eta_0^3 V_0^{3/2}}
\sigma_0 \Big(-1476 \rho \sigma_0^2 - 648 \rho^3 \sigma_0^2 + 2520 \rho \sigma_0 \sigma_1 - 
   1296 \rho^3 \sigma_0 \sigma_1  \\
   & \qquad\qquad\qquad\qquad- 1152 \eta_0 \sigma_0 \sqrt{V_0}
   - 1872 \eta_1 \sigma_0 \sqrt{V_0} + 720 \eta_0 \rho^2 \sigma_0 \sqrt{V_0}  \nonumber \\
   & \qquad\qquad\qquad\qquad\qquad
   - 4824 \eta_1 \rho^2 \sigma_0 \sqrt{V_0} - 432 \eta_0 \rho^2 \sigma_1 \sqrt{V_0}
   - 1152 \eta_1 \rho^2 \sigma_1 \sqrt{V_0} 
   \nonumber
   \\
   &\qquad\qquad\qquad\qquad\qquad\qquad+ 1152 \eta_0^2 \rho V_0 
   - 864 \eta_0 \eta_1 \rho V_0 - 2736 \eta_1^2 \rho V_0 - 4320 \eta_0 \eta_2 \rho V_0 \Big)\,,\nonumber
\end{align}
%%%%%%%%%%%%%%%%%%%%%%%%%%
\begin{align}
\bar c_2 &=
\frac{1}{80 \eta_0^3 V_0^{ 3/2}}
\sigma_0 \Big(1818 \rho \sigma_0^2 + 1512 \rho^3 \sigma_0^2 - 5580 \rho \sigma_0 \sigma_1 + 
   4104 \rho^3 \sigma_0 \sigma_1 + 1080 \rho^3 \sigma_1^2 \\
   &\qquad\qquad\qquad\qquad + 2160 \rho^3 \sigma_0 \sigma_2 + 
   2016 \eta_0 \sigma_0 \sqrt{V_0} + 2376 \eta_1 \sigma_0 \sqrt{V_0} - 
   324 \eta_0 \rho^2 \sigma_0 \sqrt{V_0}  \nonumber \\
   &\qquad\qquad\qquad\qquad\quad+ 8064 \eta_1 \rho^2 \sigma_0 \sqrt{V_0} + 
   1728 \eta_0 \rho^2 \sigma_1 \sqrt{V_0} + 2808 \eta_1 \rho^2 \sigma_1 \sqrt{V_0} 
   \nonumber
   \\
   &\qquad\qquad\qquad\qquad\quad\quad- 
   2016 \eta_0^2 \rho V_0 + 3384 \eta_0 \eta_1 \rho V_0 + 1080 \eta_1^2 \rho V_0 + 
   12960 \eta_0 \eta_2 \rho V_0 \Big)\,,\nonumber
\end{align}
%%%%%%%%%%%%%%%%%%%%%%%%
\begin{align}
\bar c_3 &=
\frac{1}{80 \eta_0^3 V_0^{ 3/2}}
\sigma_0 \Big(-990 \rho \sigma_0^2 - 1404 \rho^3 \sigma_0^2 + 5580 \rho \sigma_0 \sigma_1 - 
   4968 \rho^3 \sigma_0 \sigma_1 \\
   &\qquad\qquad\qquad\qquad- 2160 \rho^3 \sigma_1^2 - 4320 \rho^3 \sigma_0 \sigma_2 - 
   1440 \eta_0 \sigma_0 \sqrt{V_0} - 1440 \eta_1 \sigma_0 \sqrt{V_0} \nonumber \\
   & \qquad\qquad\qquad\qquad\qquad- 
   468 \eta_0 \rho^2 \sigma_0 \sqrt{V_0} - 6156 \eta_1 \rho^2 \sigma_0 \sqrt{V_0}- 
   2376 \eta_0 \rho^2 \sigma_1 \sqrt{V_0} \nonumber
   \\
   &\qquad\qquad\quad\qquad\qquad\qquad- 2736 \eta_1 \rho^2 \sigma_1 \sqrt{V_0} + 
   1440 \eta_0^2 \rho V_0 - 3816 \eta_0 \eta_1 \rho V_0 
   \nonumber
   \\
   &\qquad\qquad\quad\qquad\qquad\qquad\qquad+ 864 \eta_1^2 \rho V_0 - 
   12960 \eta_0 \eta_2 \rho V_0 \Big)\,,\nonumber
\end{align}
%%%%%%%%%%%%%%%%%%%%%%%%%
\begin{align}
\bar c_4 &=
\frac{1}{80 \eta_0^3 V_0^{ 3/2}}
\sigma_0 \Big(135 \rho \sigma_0^2 + 675 \rho^3 \sigma_0^2 - 2610 \rho \sigma_0 \sigma_1 + 
   2700 \rho^3 \sigma_0 \sigma_1 + 1350 \rho^3 \sigma_1^2  \\
   &\qquad\qquad\qquad\qquad+ 2700 \rho^3 \sigma_0 \sigma_2+ 
   360 \eta_0 \sigma_0 \sqrt{V_0} + 360 \eta_1 \sigma_0 \sqrt{V_0} + 
   495 \eta_0 \rho^2 \sigma_0 \sqrt{V_0}  \nonumber \\
   &\qquad\qquad\qquad\qquad\qquad+ 2295 \eta_1 \rho^2 \sigma_0 \sqrt{V_0}+ 
   1350 \eta_0 \rho^2 \sigma_1 \sqrt{V_0} + 1350 \eta_1 \rho^2 \sigma_1 \sqrt{V_0} 
   \nonumber
   \\
   &\qquad\qquad\qquad\qquad\qquad\qquad- 
   360 \eta_0^2 \rho V_0 + 1710 \eta_0 \eta_1 \rho V_0 - 360 \eta_1^2 \rho V_0 + 
   4860 \eta_0 \eta_2 \rho V_0 \Big)\,,\nonumber
\end{align}
%%%%%%%%%%%%%%%%%%%%%%%%%%%%%
\begin{align}
\bar c_5 &=
\frac{1}{80 \eta_0^3 V_0^{ 3/2}}
\sigma_0 \Big(45 \rho \sigma_0^2 - 135 \rho^3 \sigma_0^2 + 450 \rho \sigma_0 \sigma_1 - 
   540 \rho^3 \sigma_0 \sigma_1 - 270 \rho^3 \sigma_1^2  \\
   &\qquad\qquad\qquad\qquad- 540 \rho^3 \sigma_0 \sigma_2- 
   135 \eta_0 \rho^2 \sigma_0 \sqrt{V_0} - 315 \eta_1 \rho^2 \sigma_0 \sqrt{V_0} - 
   270 \eta_0 \rho^2 \sigma_1 \sqrt{V_0} 
   \nonumber
   \\
   &\qquad\qquad\qquad\qquad\qquad- 270 \eta_1 \rho^2 \sigma_1 \sqrt{V_0}- 
   270 \eta_0 \eta_1 \rho V_0 - 540 \eta_0 \eta_2 \rho V_0 \Big)\,.\nonumber
\end{align}

\end{document}